  \providecommand\BibTeX{{%
    \normalfont B\kern-0.5em{\scshape i\kern-0.25em b}\kern-0.8em\TeX}}}
\begin{document}

%%
%% The "title" command has an optional parameter,
%% allowing the author to define a "short title" to be used in page headers.
\title{Maximum Independent Set Formation on a Finite Grid by Myopic Robots}

%%
%% The "author" command and its associated commands are used to define
%% the authors and their affiliations.
%% Of note is the shared affiliation of the first two authors, and the
%% "authornote" and "authornotemark" commands
%% used to denote shared contribution to the research.
\author{Raja Das}
\email{rajad.math.rs@jadavpuruniversity.in}
\orcid{}
\affiliation{%
  \institution{Jadavpur University}
  \streetaddress{Raja S.C. Mullick Road}
  \city{Kolkata}
  \country{India}
  \postcode{700032}
}

\author{Avisek Sharma}
\email{aviseks.math.rs@jadavpuruniversity.in}
\orcid{}
\affiliation{%
  \institution{Jadavpur University}
  \streetaddress{Raja S.C. Mullick Road}
  \city{Kolkata}
  \country{India}
  \postcode{700032}
}

\author{Buddhadeb Sau}
\email{buddhadeb.sau@jadavpuruniversity.in}
\orcid{}
\affiliation{%
  \institution{Jadavpur University}
  \streetaddress{Raja S.C. Mullick Road}
  \city{Kolkata}
  \country{India}
  \postcode{700032}
}

% \author{Lars Th{\o}rv{\"a}ld}
% \affiliation{%
%   \institution{The Th{\o}rv{\"a}ld Group}
%   \streetaddress{1 Th{\o}rv{\"a}ld Circle}
%   \city{Hekla}
%   \country{Iceland}}
% \email{larst@affiliation.org}

% \author{Valerie B\'eranger}
% \affiliation{%
%   \institution{Inria Paris-Rocquencourt}
%   \city{Rocquencourt}
%   \country{France}
% }

% \author{Aparna Patel}
% \affiliation{%
%  \institution{Rajiv Gandhi University}
%  \streetaddress{Rono-Hills}
%  \city{Doimukh}
%  \state{Arunachal Pradesh}
%  \country{India}}

% \author{Huifen Chan}
% \affiliation{%
%   \institution{Tsinghua University}
%   \streetaddress{30 Shuangqing Rd}
%   \city{Haidian Qu}
%   \state{Beijing Shi}
%   \country{China}}

% \author{Charles Palmer}
% \affiliation{%
%   \institution{Palmer Research Laboratories}
%   \streetaddress{8600 Datapoint Drive}
%   \city{San Antonio}
%   \state{Texas}
%   \country{USA}
%   \postcode{78229}}
% \email{cpalmer@prl.com}

% \author{John Smith}
% \affiliation{%
%   \institution{The Th{\o}rv{\"a}ld Group}
%   \streetaddress{1 Th{\o}rv{\"a}ld Circle}
%   \city{Hekla}
%   \country{Iceland}}
% \email{jsmith@affiliation.org}

% \author{Julius P. Kumquat}
% \affiliation{%
%   \institution{The Kumquat Consortium}
%   \city{New York}
%   \country{USA}}
% \email{jpkumquat@consortium.net}

%%
%% By default, the full list of authors will be used in the page
%% headers. Often, this list is too long, and will overlap
%% other information printed in the page headers. This command allows
%% the author to define a more concise list
%% of authors' names for this purpose.
\renewcommand{\shortauthors}{Das, Sharma and Sau}

%%
%% The abstract is a short summary of the work to be presented in the
%% article.
\begin{abstract}
This work deals with the Maximum Independent Set ($\mathcal{MIS}$) formation problem in a finite rectangular grid by autonomous robots. Suppose we are given a set of identical robots, where each robot is placed on a node of a finite rectangular grid $\mathcal{G}$ such that no two robots are on the same node. The $\mathcal{MIS}$ formation problem asks to design an algorithm, executing which each robot will move autonomously and terminate at a node such that after a finite time the set of nodes occupied by the robots is a maximum independent set of $\mathcal{G}$. We assume that robots are anonymous and silent, and they execute the same distributed algorithm.

Previous works solving this problem used one or several door nodes through which the robots enter inside the grid or the graph one by one and occupy required nodes. In this work, we propose a deterministic algorithm that solves the $\mathcal{MIS}$ formation problem in a more generalized scenario, i.e., when the total number of required robots to form an $\mathcal{MIS}$ are arbitrarily placed on the grid. The proposed algorithm works under a semi-synchronous scheduler using robots with only 2 hop visibility range and only 3 colors.
\end{abstract}

%%
%% The code below is generated by the tool at http://dl.acm.org/ccs.cfm.
%% Please copy and paste the code instead of the example below.
%%
\begin{CCSXML}
<ccs2012>
 <concept>
  <concept_id>10010520.10010553.10010562</concept_id>
  <concept_desc>Computer systems organization~Embedded systems</concept_desc>
  <concept_significance>500</concept_significance>
 </concept>
 <concept>
  <concept_id>10010520.10010575.10010755</concept_id>
  <concept_desc>Computer systems organization~Redundancy</concept_desc>
  <concept_significance>300</concept_significance>
 </concept>
 <concept>
  <concept_id>10010520.10010553.10010554</concept_id>
  <concept_desc>Computer systems organization~Robotics</concept_desc>
  <concept_significance>100</concept_significance>
 </concept>
 <concept>
  <concept_id>10003033.10003083.10003095</concept_id>
  <concept_desc>Networks~Network reliability</concept_desc>
  <concept_significance>100</concept_significance>
 </concept>
</ccs2012>
\end{CCSXML}

% \ccsdesc[500]{Computer systems organization~Embedded systems}
% \ccsdesc[300]{Computer systems organization~Redundancy}
% \ccsdesc{Computer systems organization~Robotics}
% \ccsdesc[100]{Networks~Network reliability}

%%
%% Keywords. The author(s) should pick words that accurately describe
%% the work being presented. Separate the keywords with commas.
\keywords{Myopic robot, Maximum Independent Set, Finite Grid, Autonomous robots, Robot with lights,  Distributed algorithms}

%% A "teaser" image appears between the author and affiliation
%% information and the body of the document, and typically spans the
%% page.

%%
%% This command processes the author and affiliation and title
%% information and builds the first part of the formatted document.
\maketitle

\section{Introduction}
\label{sec:intro}

% Note that a robot can only stay on the nodes of the graph $\mathcal{G}$. Also a robot can only move through the edges of $\mathcal{G}$. Let distance between the adjacent nodes of $\mathcal{G}$ be $a$. So if a set of robots can sense all the nodes of the graph $\mathcal{G}$ then it is equivalent to cover the whole rectangular area.

Consider a rectangular area $R$ as a bounded region in the two-dimensional Euclidean plane. We embed a rectangular grid graph $\mathcal{G}$ in that rectangular area $R$. Let a robot with sensing capability stay on the nodes of $\mathcal{G}$. Let depending on the sensing radius of the robot, the grid is embedded in such a way that, being placed on a node a robot can sense its immediate upward, immediate downward, immediate left, and immediate right neighbour node along with its position completely. Let a robot can move to its immediate upward, immediate downward, immediate left, and immediate right neighbour nodes through the edges of $\mathcal{G}$. Now we want to place a set of robots on some nodes of $\mathcal{G}$ such that each node of $\mathcal{G}$ is sensed by at least one robot.
% Consider a rectangular area $R$ as a bounded region in the 2-dimensional Euclidean plane. We divide the rectangular area into smaller squares. We assume that robots can stay only at the center of a smaller square. Let the sensing radius of a robot be $s$. $R$ is divided into smaller squares in such a way that a robot which is placed on center of any square can sense its immediate upward, immediate downward, immediate left and immediate right squares along with its own square completely. A robot can move to the centers of its immediate upward, immediate downward, immediate left and immediate right squares. Now let us construct a graph $\mathcal{G}$ such that centre of each smaller square represents a node of $\mathcal{G}$. Two nodes of $\mathcal{G}$ have an edge between them if the smaller squares represented by those nodes are immediate above and below of each other or immediate left and right of each other. Note that a robot can only stay on the nodes of the graph $\mathcal{G}$. Also a robot can only move through the edges of $\mathcal{G}$. Now observe that a robot located at any node say $v$ of $\mathcal{G}$ can sense all the nodes that are directly above, below, left and right of $v$ along with $v$. So if a set of robots can sense  all the nodes of the graph $\mathcal{G}$ then it is equivalent to cover the whole rectangular area.
Now cost and resilience are the major parameters to consider. We can accomplish the target in different ways. One way can be by putting robots at each node. In this way, to disconnect a node we have to disable five robots. Here the resilience is highest but the cost is maximum (See Fig.~\ref{4fig}(a)). If we put robots on a minimum dominating set of $\mathcal{G}$ then disabling one robot will disconnect five nodes. Here the cost is minimum but resilience is the lowest (See Fig.~\ref{4fig}(b)). If we put robots on a maximal independent set of $\mathcal{G}$ then disabling two robots can disconnect at most four nodes. The number of robots required in this case is one-third of the number of nodes and this method gives a decent resilience (See Fig.~\ref{4fig}(c)). If we put robots on a maximum independent set of $\mathcal{G}$ then disabling four robots can disconnect at most five nodes. The number of robots required in this case is half of the number of nodes and this method gives good resilience (See Fig.~\ref{4fig}(d)). So in this work, we consider robots placing on a maximum independent set of $\mathcal{G}$. 
\begin{figure} 
    \centering
    \includegraphics[width=0.7\linewidth]{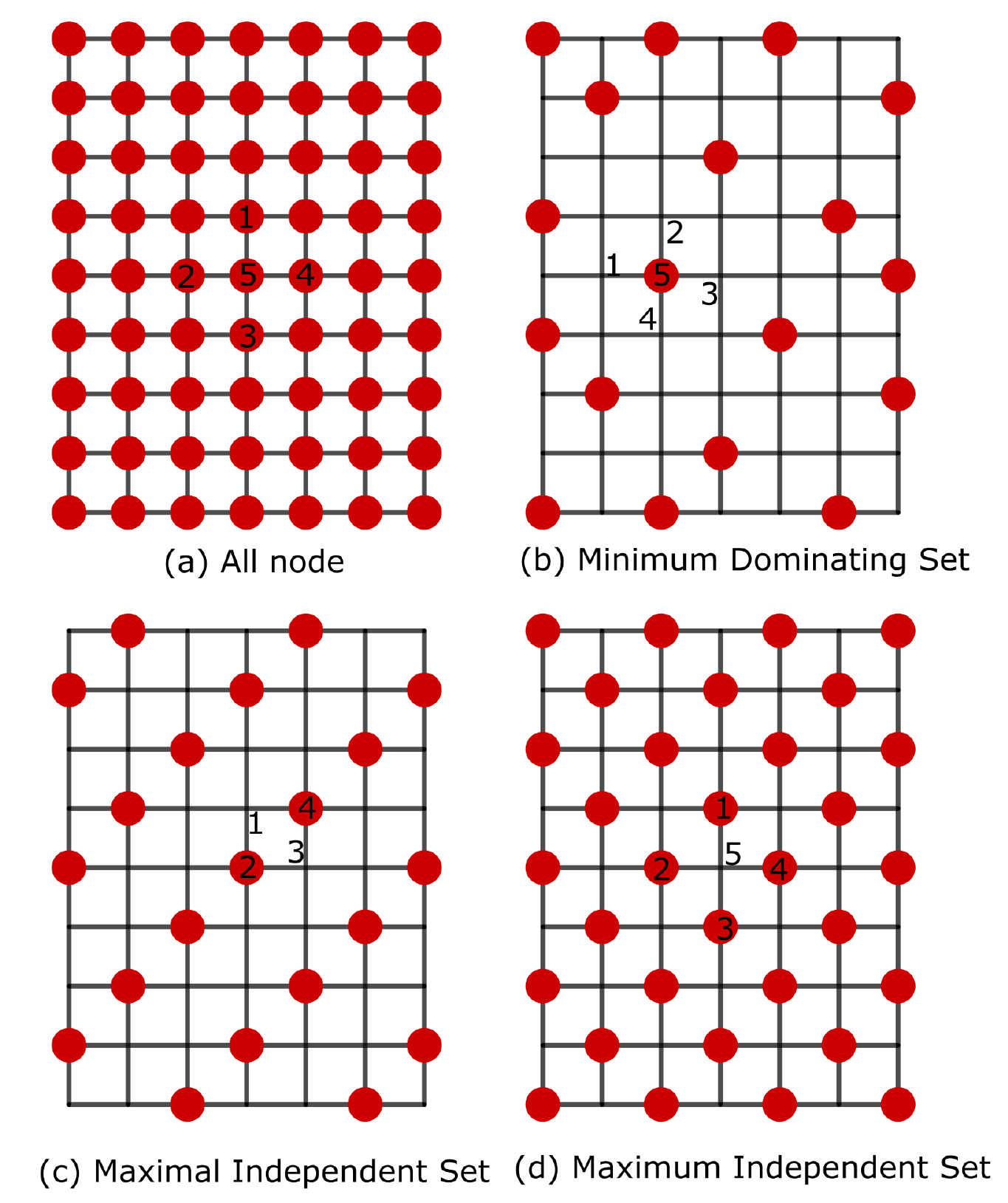}
     \caption{Various methods of covering}
    \label{4fig}
\end{figure}

In this paper, we give an algorithm of Maximum Independent Set ($\mathcal{MIS}$) formation on a finite grid. Let a swarm of autonomous robots is placed initially on the distinct nodes of $\mathcal{G}$. The $\mathcal{MIS}$ formation problem asks the robots to rearrange and take positions such that the robot occupied nodes form a $\mathcal{MIS}$ of $\mathcal{G}$. The robots work autonomously, which means they work without any central control. The robots are homogeneous (i.e., they all run the same algorithm), identical (indistinguishable), and anonymous (without any identifier). Such robot swarms are can have the capability to do certain tasks like gathering, dispersion, exploration, pattern formation, filling, etc.  In some cases, robots have memory and can communicate with other robots. Based on these powers there are four types of robot models which are  $\mathcal{OBLOT}$, $\mathcal{FSTA}$, $\mathcal{FCOM}$, $\mathcal{LUMI}$. In $\mathcal{OBLOT}$ model robots are silent (no communication) and oblivious (no persistent memory). In $\mathcal{FSTA}$ model robots are silent and non-oblivious. In $\mathcal{FCOM}$ model robots can communicate but are oblivious. In $\mathcal{LUMI}$ model robots can communicate and are non-oblivious. Robots can have a finite bit of memory which is generally interpreted as a finite number of lights that can take finitely many different colors. Seeing own light is equivalent to having memory and seeing the lights of other robots is equivalent to communication. After activation, each robot follows a look-compute-move (LCM) cycle. In the look phase, the robot takes a snapshot of its surrounding in its vicinity and gets the position and states of other robots. In compute phase it runs the algorithm and gets an output. In the move phase, the robot moves to its destination node or stays at the same node depending on the output. Activation plays a big role and it is determined by the scheduler. There are generally three types of schedulers. These are (1)~fully synchronous scheduler where the time is divided into global rounds and each robot work activates in each round and simultaneously executes their LCM cycle; (2)~Semi synchronous scheduler where also the time is divided into global rounds but some robots activate in a round and simultaneously execute their LCM cycle; (3)~Asynchronous scheduler where there is no common notion of time among robots and all robots execute their LCM cycle independently.    

Vision is an important factor in performing these tasks. In \cite{3,4,13,14} infinite visibility has been used. But infinite visibility is not practically possible due to hardware limitations. Limited visibility is more practical. Under limited visibility, a robot can see up to a certain distance in a plane and up to a certain hop in discrete space. In our work, we consider $\mathcal{LUMI}$ model robots with 2 hop visibility range and 3 colors under a semi-synchronous scheduler. The robots agree on the two directions and their orientations, one which is parallel to rows of $\mathcal{G}$ and another which is parallel to the columns of $\mathcal{G}$. Hence each robot can determine its four directions. In this work, we propose an $\mathcal{MIS}$ formation algorithm for a robot swarm, which is initially placed arbitrarily on the nodes of the grid. We show that the proposed algorithm forms $\mathcal{MIS}$ under a semi-synchronous scheduler using robots having only two hop visibility and a light that can take three distinct colors. The next section describes all relevant works and discusses the scope of our work.

\subsection{Related Works and Our Contributions}
Using swarm robotics various types of problems have been studied like exploration \cite{11,13}, gathering \cite{12,14}, dispersion \cite{8,9,10}, pattern formation \cite{3,4,5} under different model. In \cite{3,4,13,14} robots are considered to have infinite visibility. But infinite visibility is not practically possible due to hardware limitations. Limited visibility is more practical. Robots with limited visibility are called $myopic$ robots. Myopic robots have been used in \cite{11,12,15,16}. A lot of problems \cite{4,10,13,14,15,17,18} have been explored under grid graph. $\mathcal{MIS}$ formation on a finite grid can be seen from two perspectives. One perspective is the deployment of robots through a door node. Another perspective is pattern formation. As of our knowledge, there is no algorithm for arbitrary pattern formation in a finite grid graph. To the best of our knowledge, there are only two reported work \cite{15,16} which considers $\mathcal{MIS}$ formation problem on a graph using an autonomous robot swarm. \cite{15} have given an $\mathcal{MIS}$ filling algorithm using robots having light with three colors, 2 hop visibility for fully oriented finite grid under asynchronous scheduler. In another algorithm, they have solved the same problem using robots with seven light colors, and 3 hop visibility under an asynchronous scheduler but in an unoriented grid. \cite{16} have given an $\mathcal{MIS}$ filling algorithm for arbitrary graph with one door node using $(\Delta +6)$ light color, 3 hop visibility, $O(\log(\Delta))$ bits of persistent storage under asynchronous scheduler. In another algorithm they have solved the same problem with $k(>1)$ door nodes using $(\Delta +k+6)$ light color, 5 hop visibility, $O(\log(\Delta +k))$ bits of persistent storage under semi synchronous scheduler. Another set of works is \cite{17,18} which are remotely related to $\mathcal{MIS}$ formation problem. \cite{17} solves the uniform scattering problem under an asynchronous scheduler and \cite{18} solves the uniform scattering problem under a fully synchronous scheduler on a finite rectangular grid considering myopic robots. However, $\mathcal{MIS}$ formation can not be achieved by any special case or slight modification of these works.

\begin{table}[h!]
\scriptsize
\centering
\caption{Comparison table}
\begin{tabular}{|p{0.8cm}|p{0.8cm}|p{0.8cm}|p{0.8cm}|p{0.8cm}|p{1cm}|p{0.6cm}|} 
 \hline
 Work & Visibility range (hop) & Scheduler & Door number & Graph Topology & Internal Memory & Color number \\ 
 \hline
 $1^{st}$ algorithm in \cite{15} & 2 & ASYNC & 1& oriented rectangular grid & None& 3\\
 \hline
 $2^{nd}$ algorithm in \cite{15} & 3 & ASYNC & 1& unoriented rectangular grid &None& 7\\
 \hline
 $1^{st}$ algorithm in \cite{16} & 3 & ASYNC & 1& Arbitrary connected Graph & $O(\log(\Delta))$&$\Delta+6$\\
 \hline
 $2^{nd}$ algorithm in \cite{16} & 5 & SSYNC & $k>1$ & Arbitrary connected Graph &$O(\log(\Delta+6)))$& $\Delta+k+6$\\
 \hline
 Our Algorithm & 2 & SSYNC & None (Arbitrary initial deployment) & oriented rectangular grid &None  &3\\
 \hline
\end{tabular}

\label{table:1}
\end{table}

In this paper from the motivation of finding a robust but cost-effective coverage of a rectangular region, we give an algorithm to form an $\mathcal{MIS}$ pattern on a rectangular finite grid by luminous robots under a semi-synchronous scheduler. In contrast to \cite{15,16}, our proposed algorithm does not use the door concept and allows to form of an $\mathcal{MIS}$ starting from any initial configuration. Thus, this work generalizes the initial condition of the work in \cite{15,16} for rectangular grid topology. Also, there can be practical scenarios where the door concept is not possible to implement. Suppose the robots are arbitrarily placed on the grid initially. If one wants to convert it to a door concept scenario then all robots need to gather at a corner, which might not be possible if the robots are not point robots. One might argue that all initial positions of robots can be considered as different doors and compare it with the multi-door algorithm of \cite{16} which works under a semi-synchronous scheduler. But compared with that, our algorithm uses only a constant memory. The proposed algorithm in our work uses robots having lights that can take only three colors. The multi-door algorithm in \cite{16} uses 5 hop visibility for robots whereas our algorithm only uses two hop visibility. A comparison table Table~\ref{table:1} (In this table, $\Delta$ denotes the maximum degree of a graph) is presented to clarify the scope of this work.

\paragraph*{Outline of the Paper}
Section~\ref{sec2} discusses the model and provides the problem definition. Section~\ref{sec3} presents our proposed algorithm and also proves its correctness. Finally, the section~\ref{sec4} gives the concluding remarks and discusses the future scope of our work.

\section{Model and  Problem definition}
\label{sec2}

We consider the robots equipped with motion actuators and visibility sensors. These robots move on a simple undirected connected graph $\mathcal{G}=(V,E)$, where $V$ is a finite set of $p=m\times n$ nodes and $E$ is a finite set of $q=(m-1)\times n+m\times(n-1)$ edges. $m$ and $n$ are positive integers greater than 1. Robots can stay on the nodes only. Robots can sense their surrounding nodes and can move through edges. We assume that $\mathcal{G}$ is an $m\times n$ rectangular grid embedded on a plane, where $m$ is the number of rows and $n$ is the number of columns. We call the topmost row as $1^{st}$ row, then the second row from the top as $2^{nd}$ row, and so on. Similarly, we call the leftmost column as $1^{st}$ column, then the second column from the left as $2^{nd}$ column, and so on.
We can think the grid as an $m\times n$ matrix, where the $(i,j)^{th}$ entry of the matrix represents the node on $i^{th}$ row and $j^{th}$ column of the grid. $\mathcal{G}$ satisfies the following condition: there exists an order on the nodes of $V= \{v_1,v_2,v_3,\ldots,v_p\} $, such that \begin{itemize}
    \item $\forall x\in \{1,2,\ldots,p\}, (x\ne 0 \pmod n )\implies \{v_x,v_{x+1}\}\in E$
    \item $\forall y\in \{1,2,\ldots,(m-1)\times n\}, \{v_y,v_{y+n}\}\in E.$
\end{itemize} 
We assume that the size of the rectangular grid is unknown to the robots. We consider the leftmost column, rightmost column, uppermost row, and lowermost row as the west boundary, east boundary, north boundary, and south boundary respectively.

Each robot on activation executes a look-compute-move (L-C-M) cycle. In the look phase, a robot takes a snapshot of its surrounding in its vicinity. In compute phase it runs an inbuilt algorithm taking the snapshot and its previous state (if the robot is not oblivious) as an input. Then it gets a color and a position as an output.
In the move phase, the robot changes its color if needed and moves to its destination node or stays at the same node.

\textbf{Scheduler:} There are generally three types of schedulers, which are fully synchronous, semi-synchronous, and asynchronous. In a synchronous scheduler, the time is equally divided into different rounds. The robots activated in a round execute the L-C-M cycle and each phase of the L-C-M cycle is executed simultaneously by all the robots. That means, all the active robots in a round take their snapshot at the same moment, and, Compute phase and Move phase are considered to happen instantaneously. Under a fully synchronous scheduler, each robot gets activated and executes the L-C-M cycle in every round. Under semi synchronous scheduler a nonempty set of robots gets activated in a round. An adversary decides which robot gets activated in a round. In a fair adversarial scheduler, each robot gets activated infinitely often. Under an asynchronous scheduler, there is no common notion of time for the robots. Each robot independently gets activated and executes its L-C-M
cycle. In this scheduler Compute phase and Move phase of robots take a significant amount of time. The time length of L-C-M cycles, Compute phases and Move Phases of robots may be different. Even the time length of two L-C-M cycles of one robot may be different. The gap between two consecutive L-C-M cycles or the time length of an L-C-M cycle of a robot is finite but can be unpredictably long. We consider the activation time and the time taken to complete an L-C-M cycle is determined by an adversary. In a fair adversarial scheduler, a robot gets activated infinitely often. Our work is under semi synchronous scheduler.

\textbf{Visibility of robots :} A robot can see all of its neighbour nodes within 2 hop distance. Thus a robot can see 13 nodes including its position. We denote the hop distance of visibility as $\phi$.

\begin{figure} 
    \centering
    \includegraphics[width=0.4\linewidth]{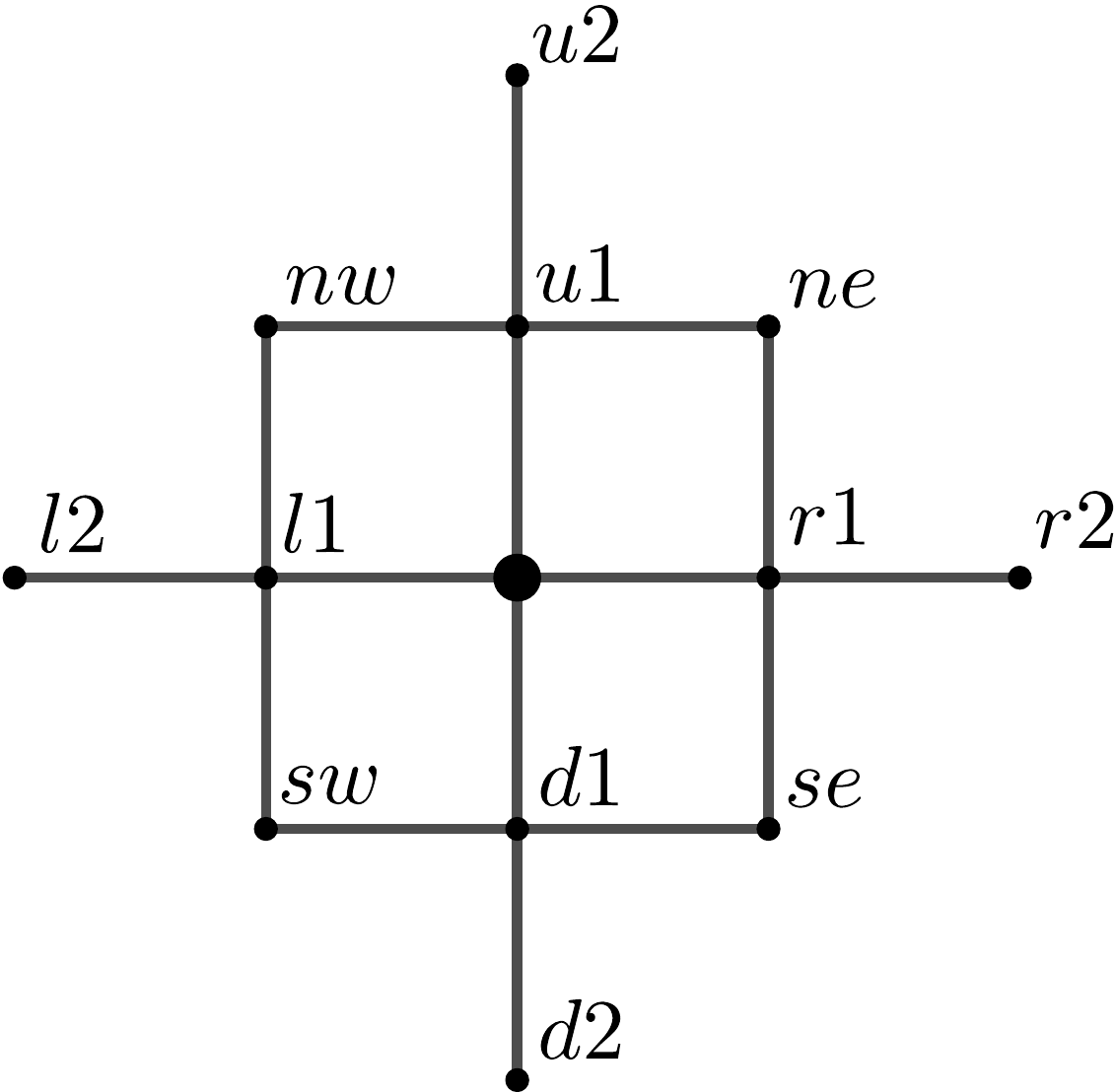}
     \caption{View of a robot with two hop visibility}
    \label{view}
\end{figure}
The first left neighbour node, second left neighbour node, first upward neighbour node, second upward neighbour node, first right neighbour node, second right neighbour node, first downward neighbour node, second downward neighbour node, north-east neighbour node, north-west neighbour node, south-east neighbour node, south-west neighbour node  of a robot are denoted by $l1$, $l2$, $u1$, $u2$, $r1$, $r2$, $d1$, $d2$, $ne$, $nw$, $se$, $sw$ respectively (See Figure~\ref{view}).

\textbf{Lights:} Each robot has a light that can take three different colors. These colors are \texttt{red}, \texttt{blue} and \texttt{green}. The initial color of each robot is \texttt{green}. The \texttt{blue} color indicates that the robot wants to move but its desired path is stuck by other robots. The \texttt{red} color indicates that the robot has reached its final position and will not move further. Here onward we shall call a robot with color \texttt{red} (or \texttt{blue} or \texttt{green}) as \texttt{red} (or \texttt{blue} or \texttt{green}) robot.

\textbf{Axes Agreement:}  All robots agree on the directions of the axis parallel to rows and the axis parallel to columns. Hence robots agree on the global notion of north, south, east, west, up, down, right, and left directions. Each robot can determine the four directions from a node.

\begin{definition}[Maximum Independent Set]
An independent set $\mathcal{I}$ of a graph $\mathcal{G}$ is a set of nodes of $\mathcal{G}$ such that no two nodes of that set are adjacent. A maximum independent set ($\mathcal{MIS}$) of $\mathcal{G}$ is an independent set of the largest possible size.
\end{definition}

Consider an $m\times n$ rectangular grid $\mathcal{G}$ where $m (\ge 2)$ is the number of rows and $n (\ge 2)$ is the number of columns present in the grid. We give coordinates to the grid nodes. The coordinates of a grid node on $i^{th}$ row and $j^{th}$ column are $(i,j)$. We consider a set $S$ of grid nodes having coordinates $\{(s,t)\in\{1,\dots,m\}\times\{1,\dots,n\}:s\equiv t\pmod 2\}$. The nodes of the set $S$ are depicted in the Figure~\ref{fig:mis}. One can calculate that $S$ contains $\lceil \frac{m\times n}{2} \rceil$ nodes. In next Proposition~\ref{lemma0} we show that $S$ is an $\mathcal{MIS}$ of $\mathcal{G}$.

\begin{proposition}\label{lemma0}
The set $S$ of nodes described above forms an $\mathcal{MIS}$ of $\mathcal{G}$.
\end{proposition}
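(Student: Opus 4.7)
The plan is to prove two claims: (i) $S$ is an independent set of $\mathcal{G}$, and (ii) every independent set of $\mathcal{G}$ has at most $\lceil mn/2 \rceil = |S|$ nodes; together these yield that $S$ is a maximum independent set.

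For (i), I would argue directly from the adjacency rule in $\mathcal{G}$: two grid nodes are adjacent exactly when their coordinates agree in one component and differ by one in the other. Suppose $(s_1,t_1)$ and $(s_2,t_2)$ lie in $S$ and are adjacent. If $s_1 = s_2$, then the defining parity conditions $s_1 \equiv t_1$ and $s_2 \equiv t_2 \pmod 2$ combine to give $t_1 \equiv t_2 \pmod 2$, contradicting $|t_1 - t_2| = 1$. The case $t_1 = t_2$ is symmetric. Hence no two nodes of $S$ are adjacent.

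For (ii), I plan a column-pairing (domino-covering) argument. Partition the columns of $\mathcal{G}$ into disjoint consecutive pairs $\{2k-1, 2k\}$ for $k = 1, \ldots, \lfloor n/2 \rfloor$. Within each such column-pair, the $2m$ nodes decompose into $m$ vertex-disjoint horizontal edges, so any independent set $I$ can contain at most one endpoint per edge, i.e., at most $m$ nodes from each column-pair. If $n$ is even this immediately yields $|I| \le m \cdot (n/2) = \lceil mn/2 \rceil$. If $n$ is odd, the leftover column $n$ induces a path on $m$ vertices whose maximum independent set has size $\lceil m/2 \rceil$, so the overall bound becomes $|I| \le m(n-1)/2 + \lceil m/2 \rceil$, which a short case split on the parity of $m$ verifies equals $\lceil mn/2 \rceil$.

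I do not expect a genuine obstacle here; the argument is elementary, and the only mildly fiddly step is the parity bookkeeping in the odd-$n$ subcase, where one must check that $m(n-1)/2 + \lceil m/2 \rceil$ collapses to $\lceil mn/2 \rceil$ both when $m$ is even and when $m$ is odd.
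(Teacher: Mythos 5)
Your proposal is correct, and part (ii) takes a genuinely different route from the paper. The paper also splits the proof into independence plus a size bound, but it establishes the bound by contradiction: assuming an independent set $S'$ of size $|S|+1$, it applies the pigeonhole principle row by row. When $mn$ is even this is quick (some row gets $\frac{n}{2}+1$ nodes, forcing two adjacent ones), but when $mn$ is odd the paper needs a more delicate count, classifying rows as ``type-A'' (containing $\frac{n+1}{2}$ nodes of $S'$) or ``type-B'' and running a second pigeonhole over the type-B rows to reach a contradiction. Your domino-covering argument instead proves the upper bound $|I|\le\lceil mn/2\rceil$ directly: pairing consecutive columns decomposes each pair into $m$ vertex-disjoint horizontal edges, each contributing at most one vertex to any independent set, and the single leftover column (when $n$ is odd) is a path on $m$ vertices contributing at most $\lceil m/2\rceil$. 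This confines all the parity bookkeeping to one easily checked identity, $m(n-1)/2+\lceil m/2\rceil=\lceil mn/2\rceil$, and avoids the nested pigeonhole entirely; the paper's approach, on the other hand, needs no auxiliary matching structure and works purely with row counts. Both are valid; yours is the more standard and, in the odd case, the cleaner argument.
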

\begin{proof}
It is very easy to verify that no two nodes of $S$ are adjacent, so $S$ is an independent set of $\mathcal{G}$. If possible let there be an independent set of sizes more than $|S|$. Then there is an independent set $S'$ of $\mathcal{G}$ of size $p=|S|+1$. Now we have two exhaustive cases; either $mn$ is even or $mn$ is odd.

\textit{Case-I:} ($mn$ is even) Without loss of generality we assume that $n$ is even. In this case $p=\frac{mn}2 +1$. From pigeon hole principle there is at least one row which consist $\lceil \frac{p}{m} \rceil=\frac n 2+1$ nodes of $S'$. If a row consists $\frac n 2+1$ nodes of $S'$, then there will be at least two nodes of $S'$ on that row that are adjacent to each other. This contradicts the fact that $S'$ is an independent set.

\textit{Case-II:} ($mn$ is odd) In this case, $m$ and $n$ both are odd. Therefore $p=\frac{mn+1}2+1$. Since $S'$ is an independent set, so a row of $\mathcal{G}$ can contain at most $\frac{n+1}2$ nodes of $S'$. We call the rows containing $\frac{n+1}2$ nodes of $S'$ as a row of \textit{type-A}. We call the rows containing less than $\frac{n+1}2$ nodes of $S'$ as a row of \textit{type-B}. Let there be $k$ rows of \textit{type-A}, then there are $m-k$ rows of \textit{type-B}. Since $S'$ is an independent set, so all rows cannot be of \textit{type-A}. Therefore $m-k>0$. Now, in total \textit{type-A} rows contain $k\times\frac{n+1}2$ nodes of $S'$. Therefore $m-k$ rows of \textit{type-B} contain remaining $$\frac{mn+1}2+1-(k\times\frac{n+1}2)=\frac{(m-k)n+3-k}2$$ nodes of $S'$. Then from pigeon hole principle there is at least a row of \textit{type-B} that contains $$\lceil \frac{(m-k)n+3-k}{2(m-k)} \rceil>\frac{n-1}2$$ nodes of $S'$, which is a contradiction.

Hence there is no independent set of $\mathcal{G}$ having a size more than $|S|$. Therefore $S$ is an $\mathcal{MIS}$ of $\mathcal{G}$.

% Hence in total at most $\frac{n+1}2\times\frac{m+1}2$ many nodes of $S'$ can be there in all rows of \textit{type-A}. Hence total at least $\frac{mn+1}2 +1-\frac{n+1}2\times\frac{m+1}2=\frac{n-1}2\times\frac{m-1}2+1$ nodes of $S'$ can be in all row of \textit{type-B}. 

\end{proof}

\begin{figure} 
    \centering
    \includegraphics[width=0.4\linewidth]{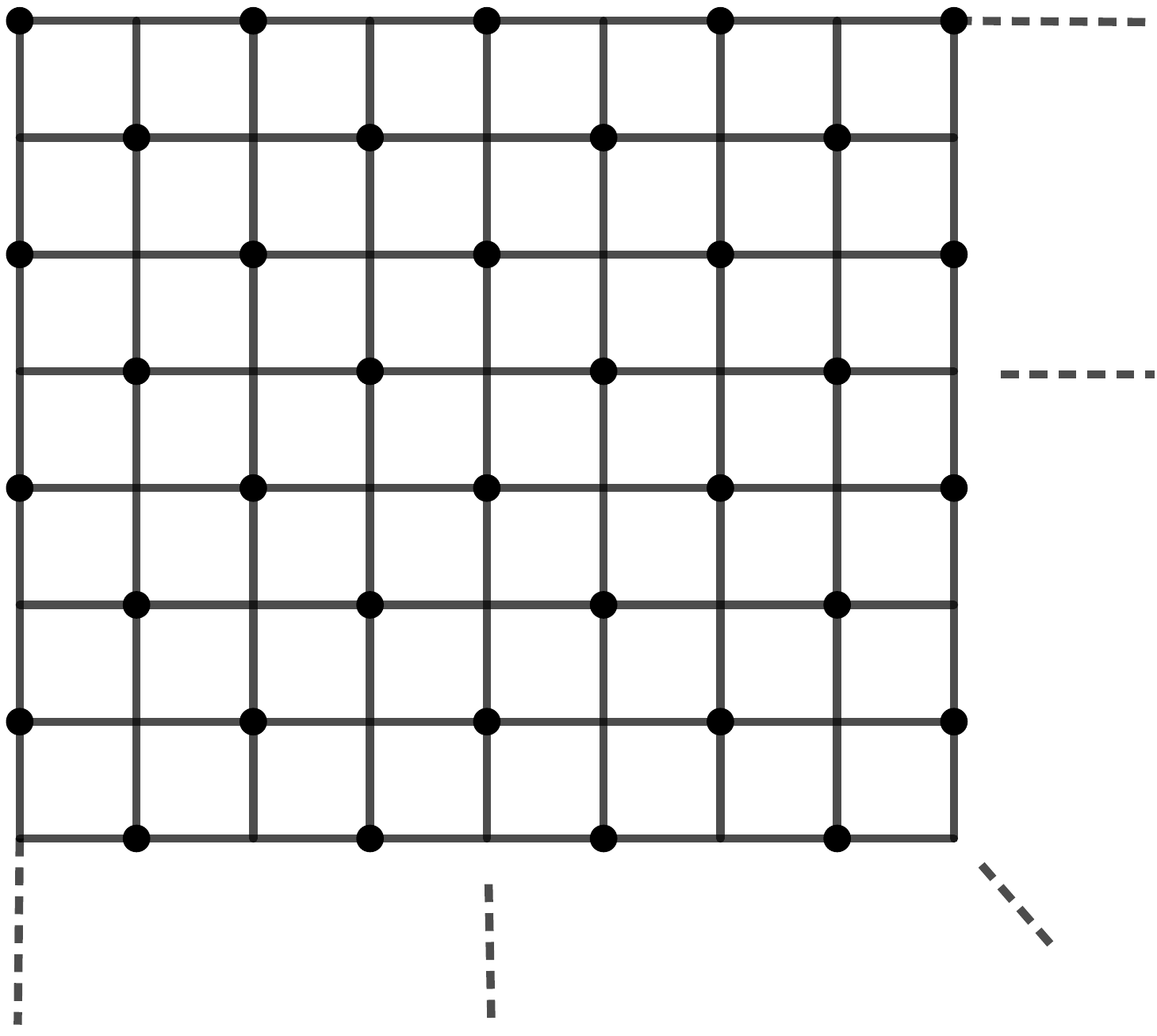}
     \caption{Black dots represent the nodes of the set $S$}
    \label{fig:mis}
\end{figure}

We define $u_i$ as the number of nodes of $S$ present in the $i^{th}$ row of the grid. If $n$ is even then $u_i=\frac{n}{2}$. For odd $n$, $u_i=\frac{n+1}{2}$ if $i$ is odd and $u_i=\frac{n-1}{2}$ if $i$ is even. We assume that initially $|S|=\lceil \frac{m\times n}{2} \rceil$ robots are present arbitrarily on different nodes of the rectangular grid such that there can be at most one robot on a node of the rectangular grid. Next, we state the problem formally.

\begin{definition}[$\mathcal{MIS}$ formation problem]
Suppose a set of finite robots are placed arbitrarily at distinct nodes of a finite rectangular grid $\mathcal{G}$. The $\mathcal{MIS}$ formation problem requires the robots to occupy distinct nodes of $\mathcal{G}$ and settle down avoiding collision such that the set of occupied nodes of $\mathcal{G}$ is a maximum independent set of $\mathcal{G}$. 
\end{definition}

The next section provides a proposed algorithm that solves $\mathcal{MIS}$ formation problem.

% \begin{definition}[1 Hop Shifting] If the adjacent \texttt{green} robot or any robot of the \texttt{blue sequence} moves from its position then each of its predecessor \texttt{blue} robots moves 1 hop to fill the vacant node and to make the starting node of the \texttt{blue sequence} vacant. This is called 1 hop shifting. 
% \end{definition}
% In Fig.~\ref{1hopshift} 1 hop shifting has been done after the robot $r_6$ moves 1 hop left and makes its position vacant.

% \begin{figure} 
%     \centering
%     \includegraphics[width=0.45\linewidth]{diagrams/bluesequence.eps}
%      \caption{Blue Sequence and Adjacent \texttt{green} Robot}
%     \label{bluesequence}
% \end{figure}

% \begin{figure} 
%     \centering
%     \includegraphics[width=0.45\linewidth]{diagrams/1hopshift.eps}
%      \caption{After 1 Hop Shifting}
%     \label{1hopshift}
% \end{figure}

% \begin{definition}[Fixed robot]When a robot becomes \texttt{red}, it does mot move any more. This robot is called a fixed robot.
% \end{definition}

\section{$\mathcal{MIS}$ Formation Algorithm}

\label{sec3}
This section provides an algorithm namely, $\mathcal{MIS}$ Formation Algorithm that claims to solve the $\mathcal{MIS}$ formation problem. Different views of a robot are depicted in different figures in this section. In the figures of this section onward \texttt{green}, \texttt{blue} and \texttt{red} color filled circles respectively represent \texttt{green} robot, \texttt{blue} robot and \texttt{red} robot. The black circle indicates a node that may or may not exist. If that node exists then it can be vacant or occupied by a robot. This means a robot can ignore black circle nodes in compute phase. The black cross indicates that the node does not exist. The black diamond indicates that the node exists. Initially, all robots are colored \texttt{green}.

\begin{definition}[North-West Quadrant] Let a robot $r_1$ is at $(i,j)^{th}$ node of a grid. Then the nodes having coordinates $\{(x,y): x\le i,y\le j\}\smallsetminus\{(i,j)\}$ are called north-west quadrant of $r_1$.
\end{definition}

% \begin{definition}[North-West Region] Let a robot $r_1$ is at $(i,j)^{th}$node of a grid. Then the nodes having coordinates $\{(x,y): x<i,y\le j\}$ are called north-west region of $r_1$.
% \end{definition}

A \texttt{green} robot moves at left by maintaining at least 2 hop distance from its left robot until it reaches the west boundary. After reaching the west boundary it moves upward by keeping at least 2 hop distance from its upward robot. In this way, a robot will be fixed at the northwest corner node and will be fixed first. \texttt{green} robots move left by maintaining the necessary distance from their left robot until it reaches the east boundary or near a \texttt{red} robot (See Fig.~\ref{GL}).

Then it moves upward by maintaining the necessary distance from its upward robot until it reaches the north boundary or near a \texttt{red} robot (See Fig.~\ref{GUR}).

Thus the robot reaches a suitable node from which it can see the necessary view to becoming \texttt{red} (See Fig.~\ref{G-R}).

\begin{definition}[Fixed robot]When a robot becomes \texttt{red}, it does not 
move any more according to the $\mathcal{MIS}$ Formation Algorithm~\ref{mis}. 
This robot is called a fixed robot.
\end{definition}

If a \texttt{green} robot $r_a$ sees that its \textit{l1} (if $r_a$ is at north boundary) or \textit{u1} (if $r_a$ is at west boundary) or both (if $r_a$ is neither at north boundary nor at west boundary) neighbour nodes are occupied by \texttt{red} robots and $r_a$ can not move upward or left then there are two possibilities.

Case-1: If \textit{r1} (if $r_a$ is not at east boundary) or \textit{d1} (if $r_a$ is at east boundary) neighbour node of $r_a$ is vacant then it will move 1 hop right (if $r_a$ is not at east boundary) or 1 hop down (if $r_a$ is at east boundary) (See Fig.~\ref{GRR}) or (See Fig.~\ref{GDR}).

Case-2: If \textit{r1} (if $r_a$ is not at east boundary) or \textit{d1} (if $r_a$ is at east boundary) neighbour node of $r_a$ is occupied by a robot then $r_a$ will turn into \texttt{blue} indicating that it has been stuck and wants to move right (if $r_a$ is not at east boundary) or down (if $r_a$ is at east boundary) but can not move (See Fig.~\ref{GBR}).

Now if a \texttt{green} robot $r_b$ sees its \textit{l1} neighbour node is occupied by a \texttt{blue} robot $r_a$ then there are following cases.

Case-1: If \textit{r1} (if  $r_b$ is not at east boundary) neighbour node of $r_b$ is vacant or \textit{d1} (if  $r_b$ is at east boundary) neighbour node of $r_b$ is vacant then it will move 1 hop right (if  $r_b$ is not at east boundary) or 1 hop down (if  $r_b$ is at east boundary) (See Fig.~\ref{GRB}) or (See Fig.~\ref{GDB}).

Case-2: If \textit{r1} (if  $r_b$ is not at east boundary) neighbour node of $r_b$ is not vacant or \textit{d1} (if  $r_b$ is at east boundary) neighbour node of $r_b$ is not vacant.

Case-2.1: If \textit{u1} and \textit{u2} neighbour node of $r_b$ is vacant then $r_b$ will move upward by keeping necessary distance from its upward robot until it reaches north boundary or near a \texttt{red} robot (See Fig.~\ref{GUB}).

Case-2.2: If \textit{u1} or \textit{u2} neighbour node of $r_b$ is occupied by a non \texttt{red} robot then $r_b$ will do nothing.

Case-2.3: If one of \textit{u1} and \textit{u2} neighbour node of $r_b$ is occupied by a \texttt{red} robot and another is vacant then  it will turn \texttt{blue} (See Fig.~\ref{GBP}).

If a \texttt{green} robot $r_d$ which is on the east boundary sees its \textit{u1} neighbour node is occupied by a \texttt{blue} robot $r_c$ then there are following cases.

Case-1: \textit{l1} and \textit{l2} neighbour node of $r_d$ both are vacant then $r_d$ will move 1 hop left.

Case-2: Anyone between \textit{l1} and \textit{l2} neighbour node of $r_d$ or both are non vacant.

Case-2.1: If \textit{d1} neighbour node of $r_d$ is vacant then $r_d$ will move 1 hop down (See Fig.~\ref{GDE}).

Case-2.2: If \textit{d1} neighbour node of $r_d$ occupied by a \texttt{green} robot then $r_d$ will turn into \texttt{blue} (See Fig.~\ref{GBE}).

\begin{definition}[Blue Sequence]
If a node or some consecutive nodes of a row or east boundary or both  in a rectangular grid are occupied by \texttt{blue} robots then the $-$ or $\neg$ or $\vert$ like sequence of consecutive \texttt{blue} robots is called a \texttt{blue sequence}.
\end{definition}

\begin{figure}[h]
     \centering
     \includegraphics[width=0.5\linewidth]{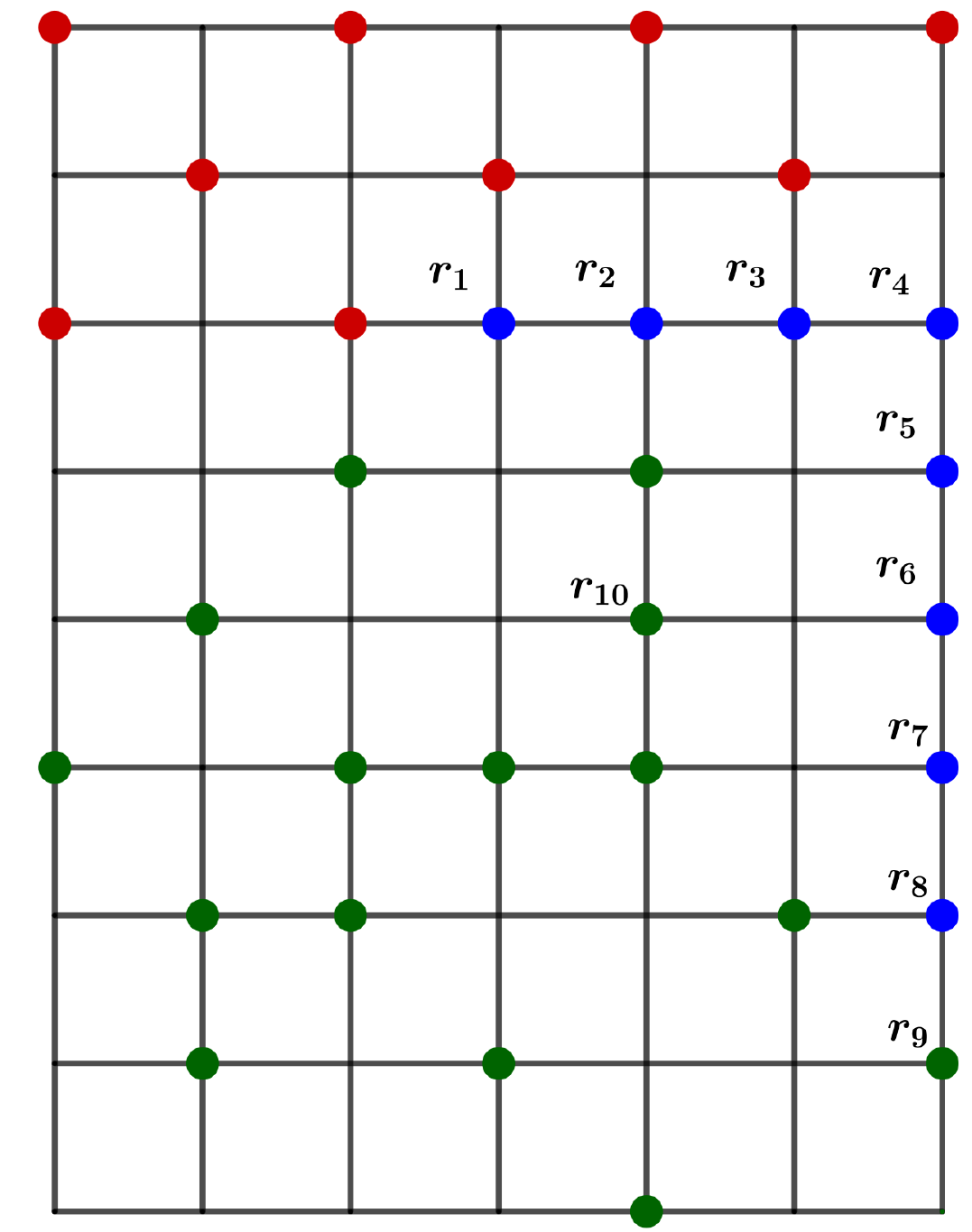}
      \caption{Blue Sequence and Adjacent green Robot}
     \label{bluesequence}
\end{figure}

In Fig.~\ref{bluesequence} the robots $r_1,r_2,r_3,r_4,r_5,r_6,r_7,r_8$ forms the \texttt{blue sequence}.

A \texttt{blue sequence} can proceed till the $(m-1,n)^{th}$node at most. If the $(m,n)^{th}$node is occupied by some robot, that robot will never be \texttt{blue} since its \textit{d1} neighbour node does not exists and the existence of \textit{d1} neighbour node is necessary to become \texttt{blue} for the robots present in the east boundary.

\begin{definition}[Adjacent Green Robot]Consider the \texttt{\texttt{blue sequence}} will not proceed further. If the \texttt{blue sequence} ends before the east boundary then the \texttt{green} robot at the \textit{r1} neighbour node of the rightmost \texttt{blue} robot of the sequence and if the \texttt{blue sequence} continues through east boundary then the \texttt{green} robot at the \textit{d1} neighbour node of the downmost  \texttt{blue} robot of the sequence present in the east boundary is called the \texttt{adjacent green robot} of the \texttt{blue sequence}.
\end{definition}

In Fig.~\ref{bluesequence}  $r_9$ is the \texttt{adjacent green robot} of the \texttt{blue sequence}.

\begin{definition}[Predecessor Blue Robots]Consider a \texttt{blue} robot $r_k$ of a \texttt{blue sequence}. All the robots which became blue in that \texttt{blue sequence} before the round in which $r_k$ became \texttt{blue}, are called the predecessor blue robots of $r_k$ in that \texttt{blue sequence}.
\end{definition}

In Fig.~\ref{bluesequence}  $r_1,r_2,r_3,r_4$ and $r_5$ are the predecessor \texttt{blue} robots of $r_6$.

If a \texttt{blue} robot $r_e$ which is not at the east boundary sees its \textit{r1} neighbour node is vacant then it turns green and moves 1 hop right (See Fig.~\ref{BGR}).

If a \texttt{blue} robot $r_e$ which is at the east boundary sees its \textit{l1}, \textit{l2} and \textit{d1} neighbour nodes then there are following cases.

Case-1:Both \textit{l1} and \textit{l2} neighbour nodes of $r_e$ are vacant then $r_e$ turns \texttt{green} and move 1 hop left (See Fig.~\ref{BGL}).

Case-2: Any one between \textit{l1} and \textit{l2} neighbour nodes of $r_e$ is not vacant and its \textit{d1} neighbour node is vacant then it turns \texttt{green} and move 1 hop down (See Fig.~\ref{BGD}).

\begin{figure}[h]
     \centering
     \includegraphics[width=0.5\linewidth]{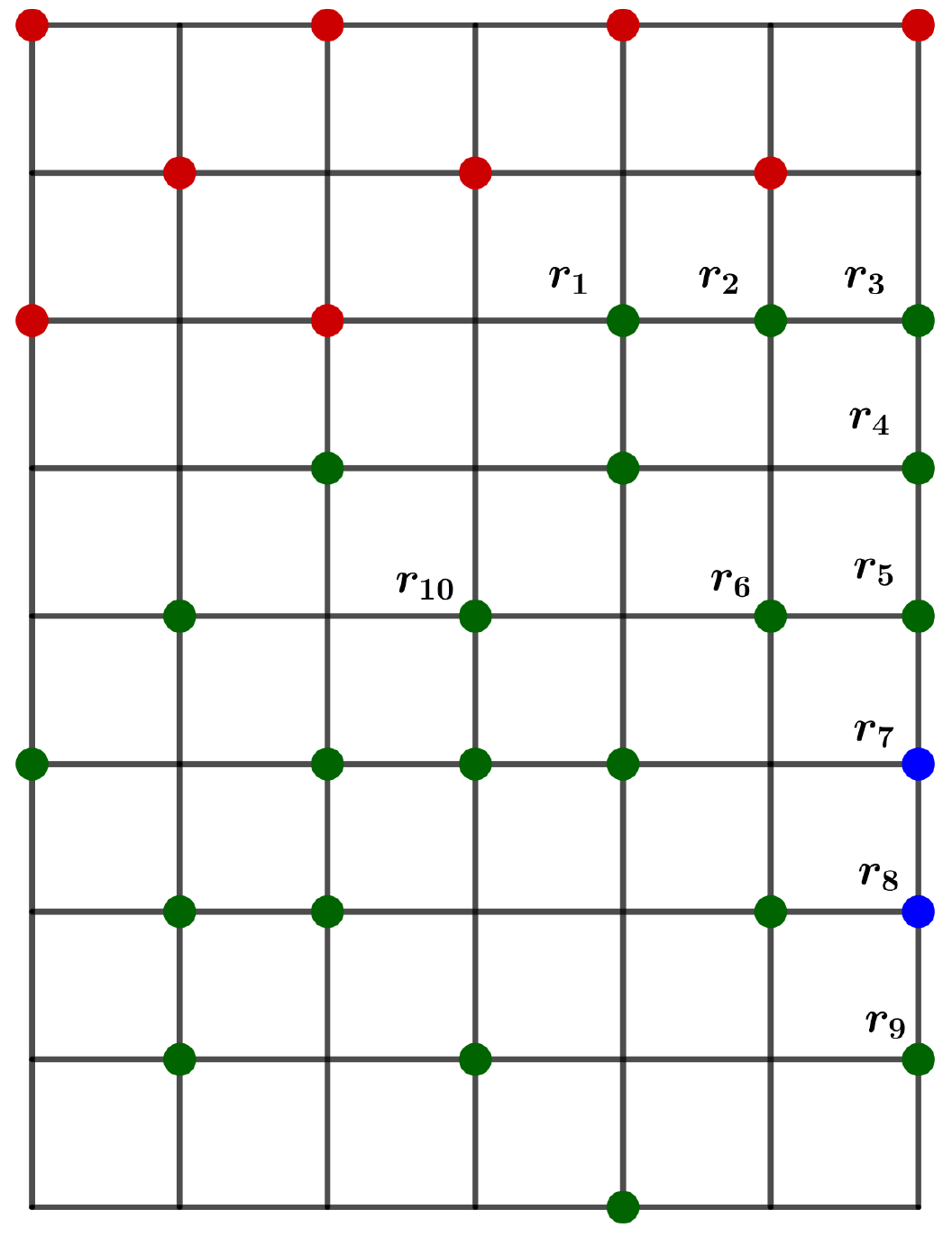}
      \caption{Tail and After 1 hop shifting}
     \label{1hopshift}
\end{figure}

\begin{definition}[Tail] If the \texttt{blue sequence} continues through east boundary and a \texttt{blue} robot of the sequence from the east boundary leaves the sequence by moving left then the remaining \texttt{blue} robots of the sequence below the leaving \texttt{blue} robot will be called tail.
\end{definition}

In Fig.~\ref{1hopshift}  $r_7,r_8$ is the tail after $r_6$ leaves the \texttt{blue sequence}.

\begin{definition}[1 Hop Shifting] If the \texttt{adjacent green robot} or any robot of the \texttt{blue sequence} moves from its position then each of its predecessor \texttt{blue} robots moves 1 hop to fill the vacant node and to make the starting node of the \texttt{blue sequence} vacant. This is called 1 hop shifting of the \texttt{blue sequence}. 
\end{definition}

In Fig.~\ref{1hopshift} 1 hop shifting of the \texttt{blue sequence} of Fig.~\ref{bluesequence} has been done after the robot $r_6$ moves 1 hop left and makes its position vacant.

If a \texttt{blue} robot $r_e$ which is at the east boundary sees its \textit{u1} neighbour node is vacant and \textit{l1} neighbour node is not occupied by a \texttt{blue} robot  then it turns \texttt{green} (See Fig.~\ref{BG}).

If a \texttt{blue} robot $r_e$ which is at the east boundary sees its \textit{u1} neighbour node is occupied by a \texttt{red} robot and \textit{l1} neighbour node is vacant then it turns \texttt{green} (See Fig.~\ref{BG}).

If a \texttt{blue} robot $r_e$ which is at the east boundary sees its \textit{u1} neighbour node is occupied by a \texttt{green} robot then it turns \texttt{green} (See Fig.~\ref{BG}).

\begin{figure}[h!]
\begin{minipage}[h]{0.73\linewidth}
\centering
\includegraphics[width=0.95\linewidth]{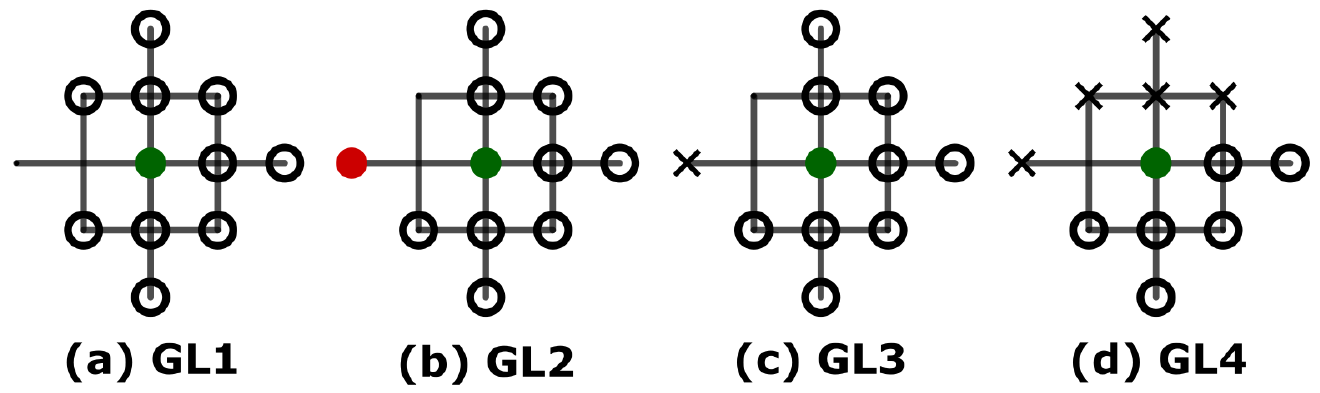}
     \caption{}
    \label{GL}
\end{minipage}
\hfill
\begin{minipage}[h!]{0.22\linewidth}
\centering
\includegraphics[width=0.8\linewidth]{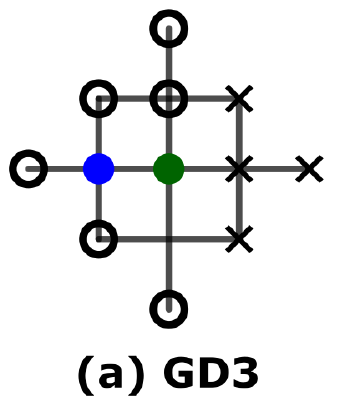}
     \caption{}
    \label{GDB}
\end{minipage}
\end{figure}

\begin{figure}[h!]
    \centering
    \includegraphics[width=0.76\linewidth]{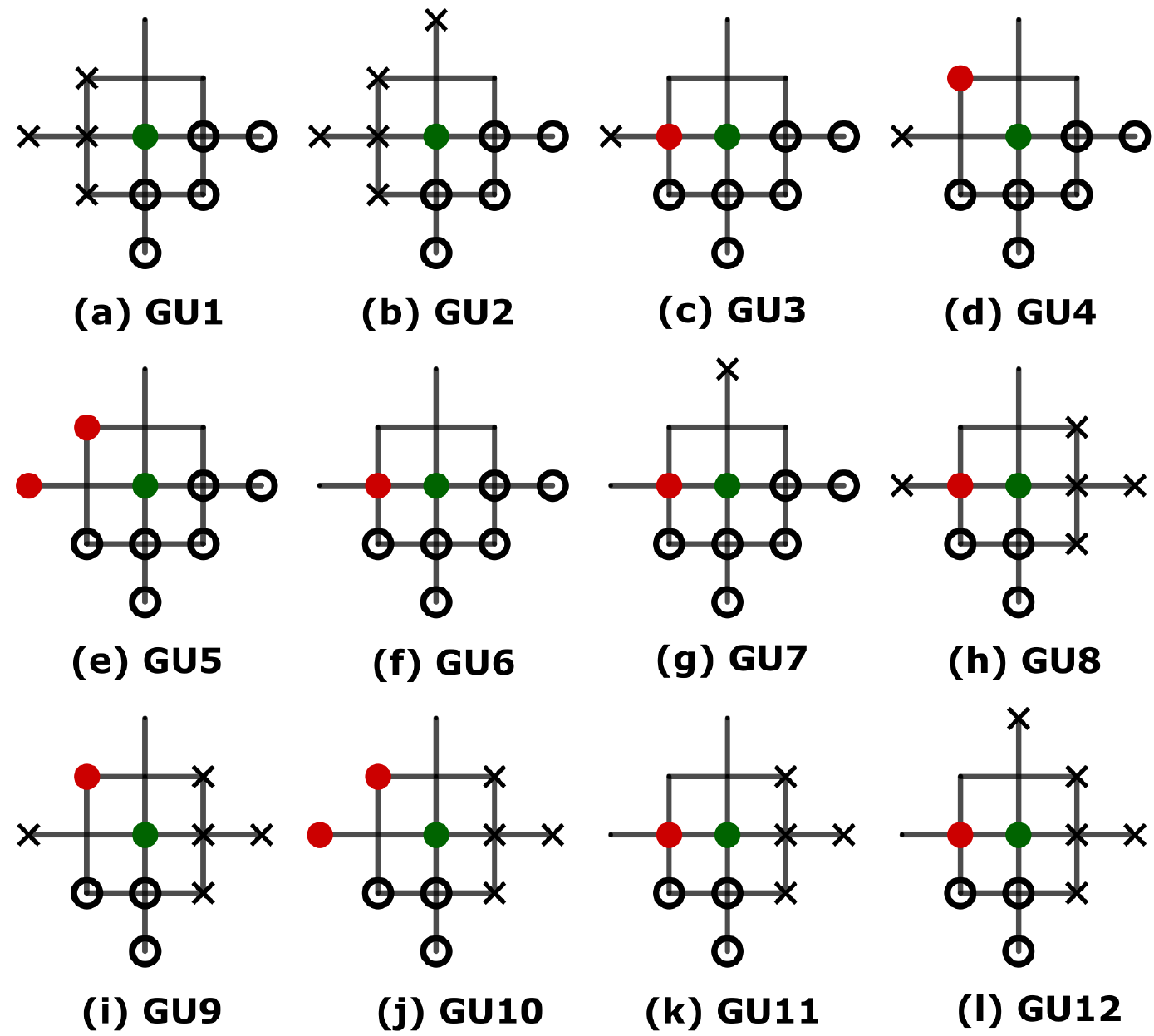}
     \caption{}
    \label{GUR}
\end{figure}
\begin{figure}[h!]
    \centering
    \includegraphics[width=0.76\linewidth]{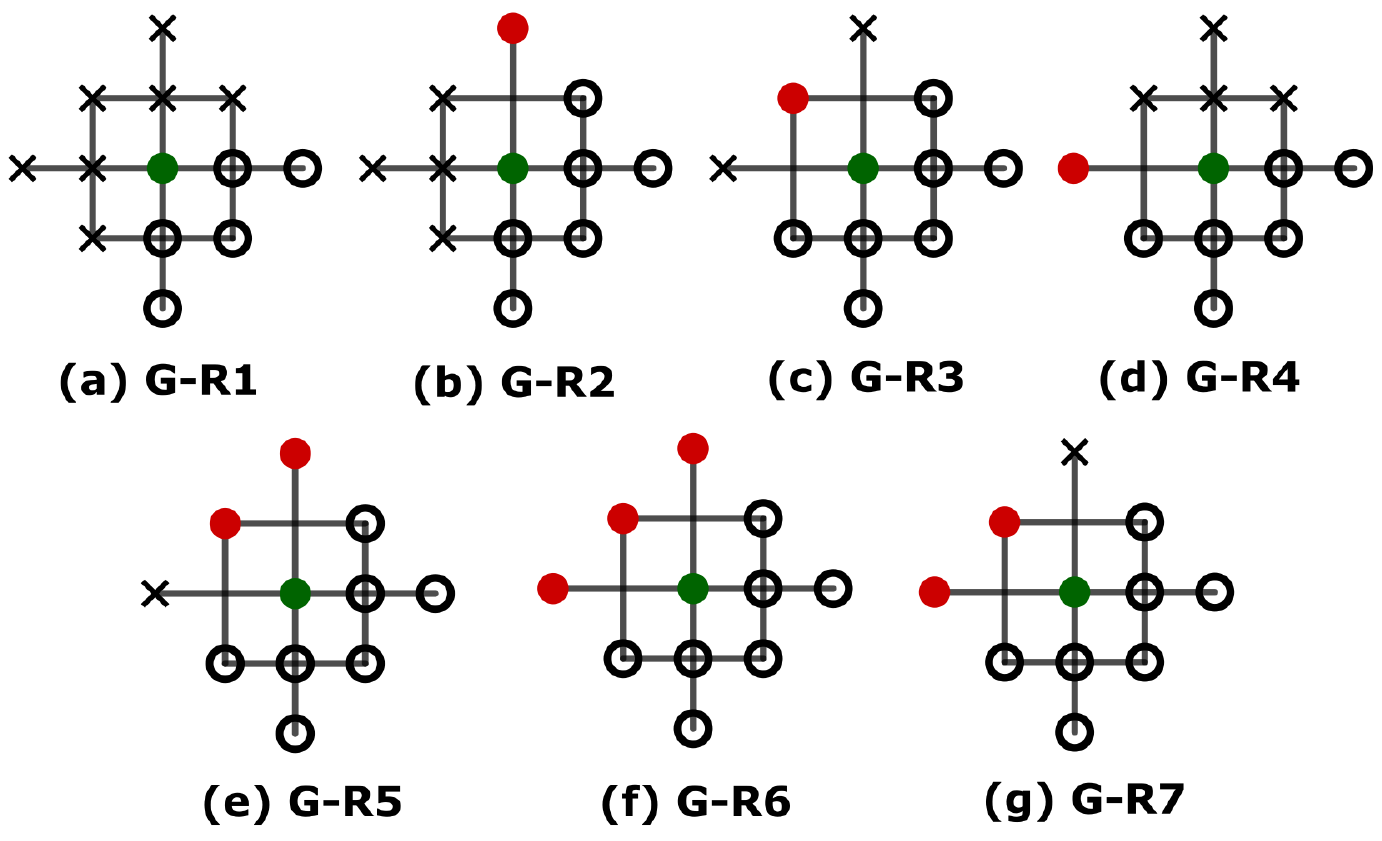}
     \caption{}
    \label{G-R}
\end{figure}
\begin{figure}[h!]
    \centering
    \includegraphics[width=0.95\linewidth]{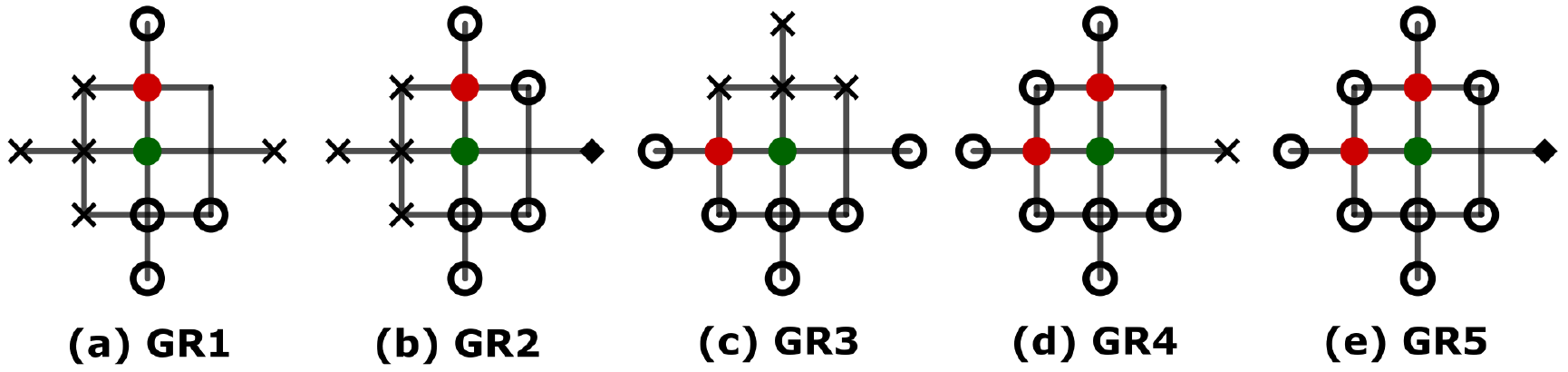}
     \caption{}
    \label{GRR}
\end{figure}

\begin{figure}[h!]
\begin{minipage}[h]{0.40\linewidth}
\centering
\includegraphics[width=0.92\linewidth]{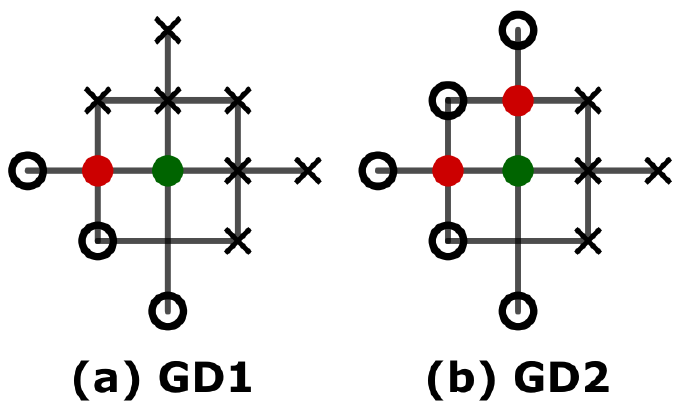}
     \caption{}
    \label{GDR}
\end{minipage}
\hfill
\begin{minipage}[h]{0.55\linewidth}
\centering
\includegraphics[width=0.99\linewidth]{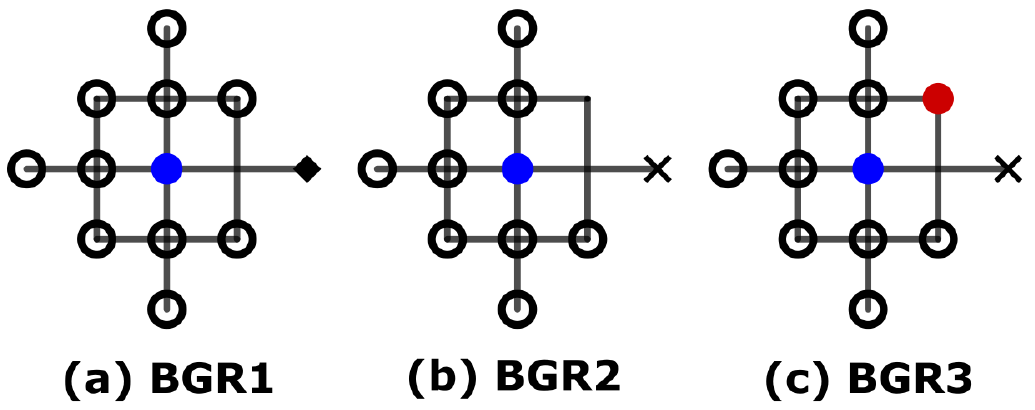}
     \caption{}
    \label{BGR}
\end{minipage}
\end{figure}
\begin{figure}[h!]
    \centering
    \includegraphics[width=0.95\linewidth]{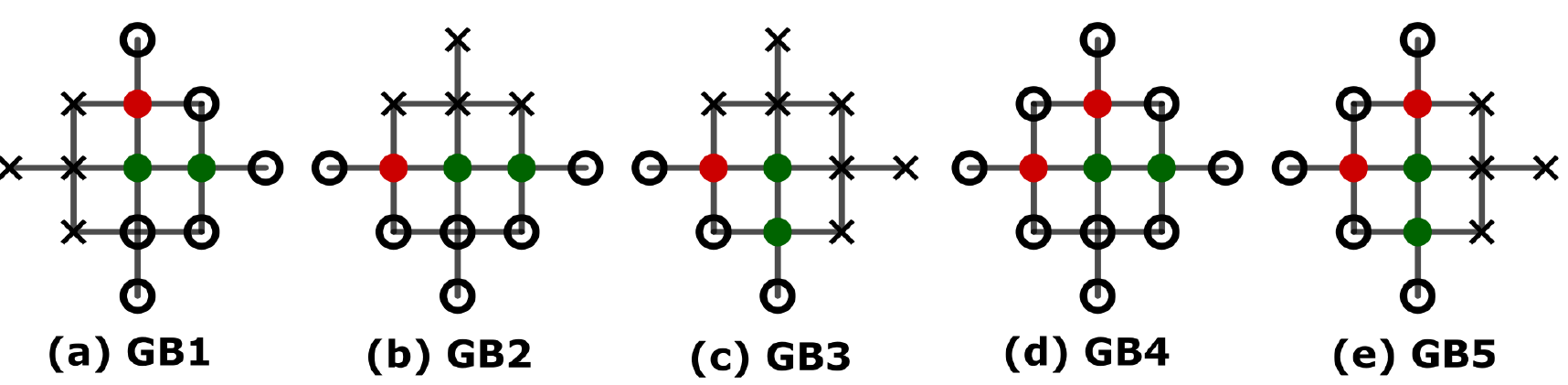}
     \caption{}
    \label{GBR}
\end{figure}
\begin{figure}[h!]
\begin{minipage}[h]{0.65\linewidth}
\centering
\includegraphics[width=0.89\linewidth]{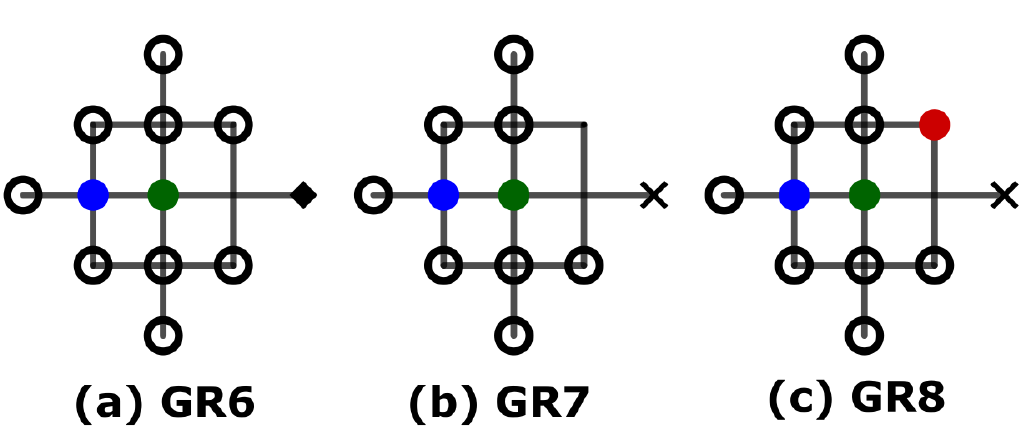}
     \caption{}
    \label{GRB}
\end{minipage}
\hspace{0.5cm}
\begin{minipage}[h]{0.25\linewidth}
\centering
\includegraphics[width=0.75\linewidth]{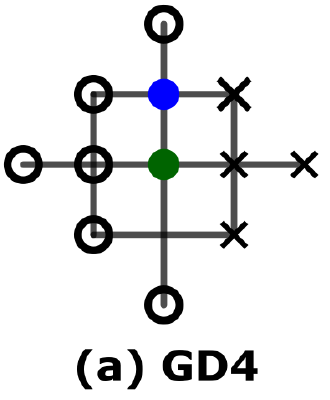}
     \caption{}
    \label{GDE}
\end{minipage}
\end{figure}

\begin{figure}[h!]
    \centering
    \includegraphics[width=0.8\linewidth]{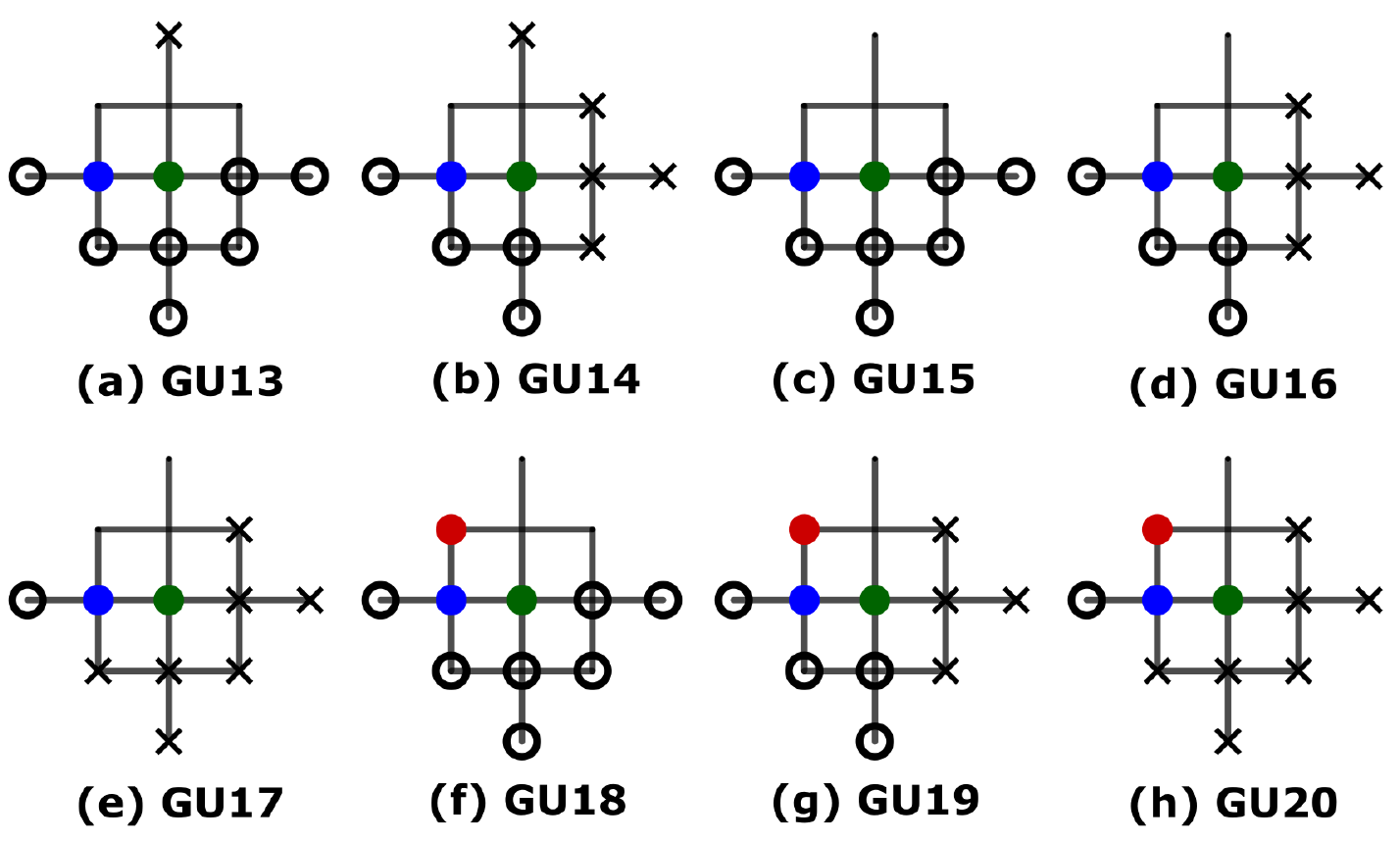}
     \caption{}
    \label{GUB}
\end{figure}
\begin{figure}[h!]
    \centering
    \includegraphics[width=0.8\linewidth]{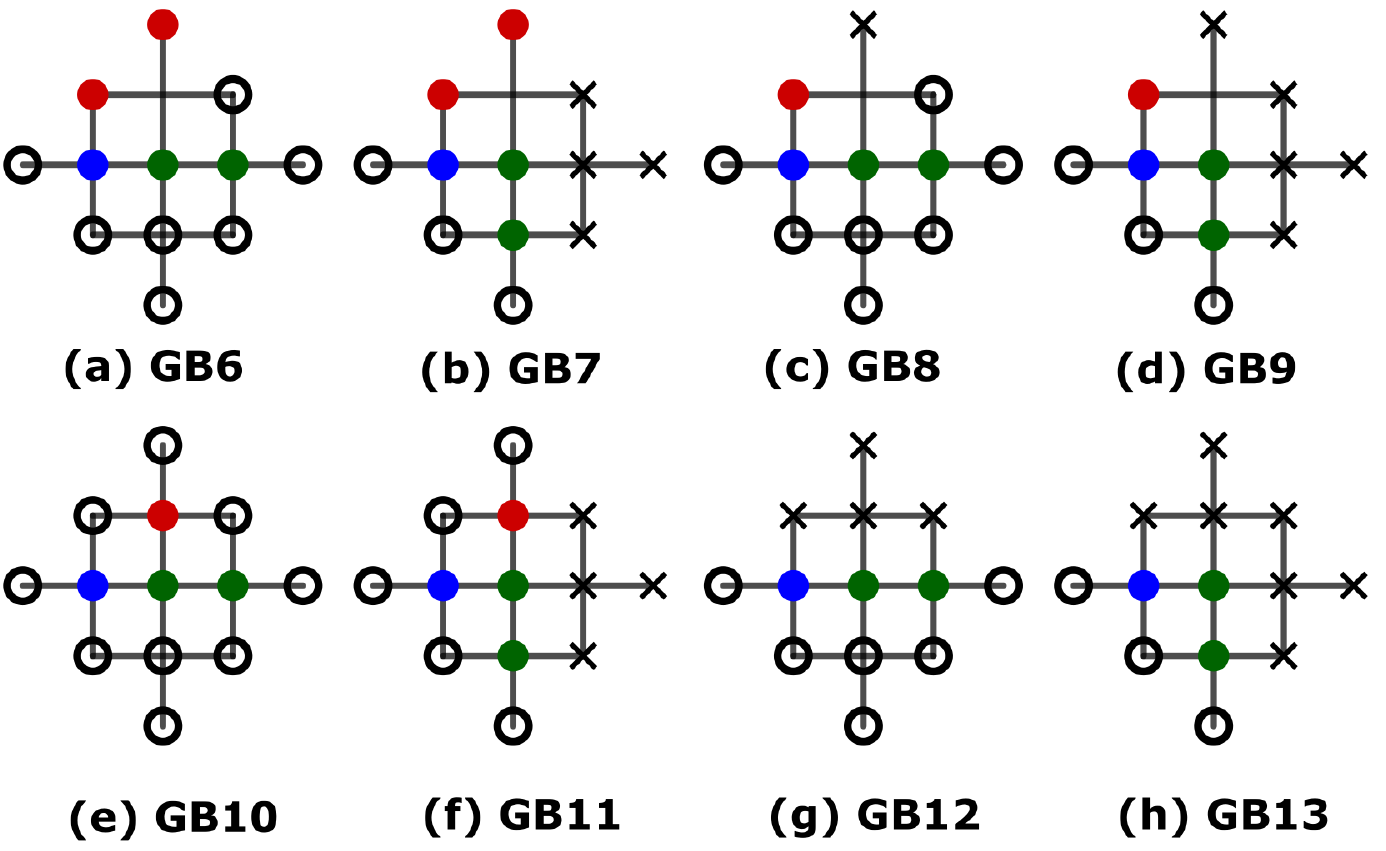}
     \caption{}
    \label{GBP}
\end{figure}

\begin{figure}[h!]
\begin{minipage}[h]{0.3\linewidth}
\centering
\includegraphics[width=0.65\linewidth]{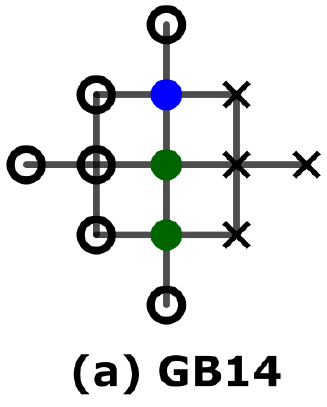}
     \caption{}
    \label{GBE}
\end{minipage}
\hfill
\begin{minipage}[h]{0.3\linewidth}
\centering
\includegraphics[width=0.65\linewidth]{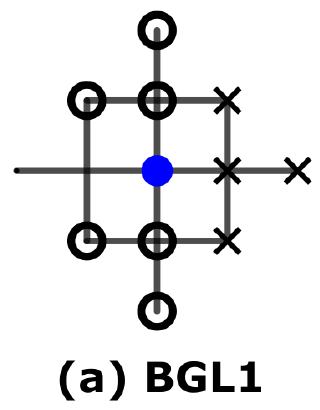}
     \caption{}
    \label{BGL}
\end{minipage}
\hfill
\begin{minipage}[h]{0.3\linewidth}
\centering
\includegraphics[width=0.65\linewidth]{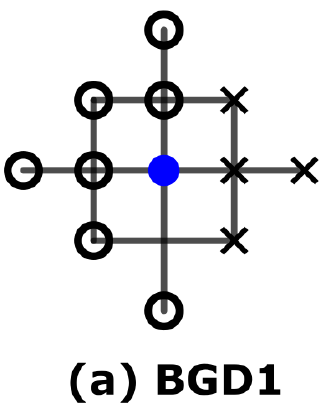}
     \caption{}
    \label{BGD}
\end{minipage}
\end{figure}
\begin{figure}[h!]
    \centering
    \includegraphics[width=0.99\linewidth]{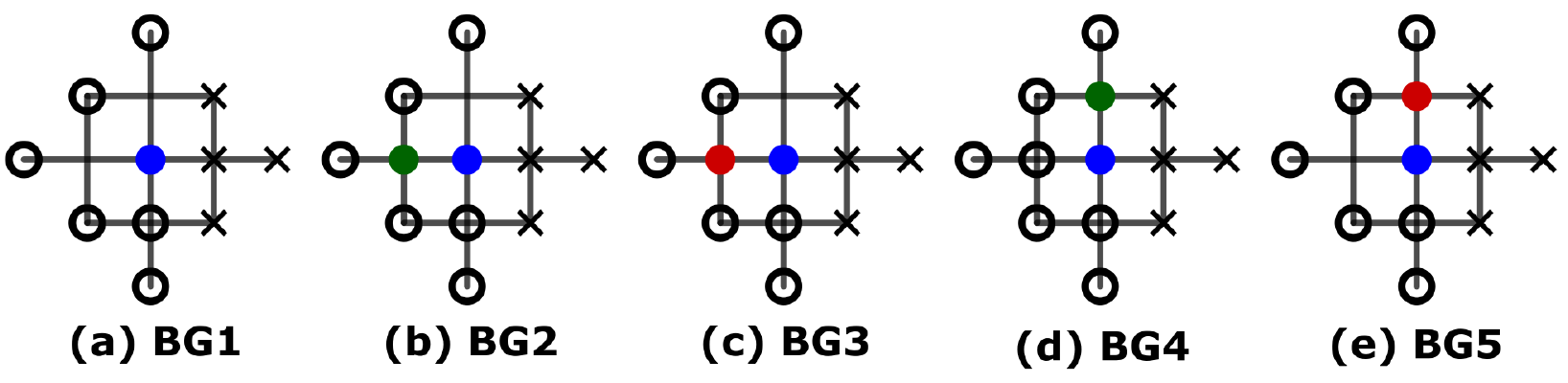}
     \caption{}
    \label{BG}
\end{figure}

Now we define some sets of views.

$G_1=$ \{\textbf{GL1, GL2, GL3, GL4}\} \hspace{0.3cm} $G_2=$ \{\textbf{GD1, GD2, GD3, GD4}\}

$G_3=$ \{\textbf{GR1, GR2, GR3, GR4, GR5, GR6, GR7, GR8}\}

$G_4=$ \{\textbf{GU1, GU2, GU3, GU4, GU5, GU6, GU7, GU8, GU9, GU10, GU11, GU12, GU13, GU14, GU15, GU16, GU17, GU18, GU19, GU20}\}

$G_5=$ \{\textbf{GB1, GB2, GB3, GB4, GB5, GB6, GB7, GB8, GB9, GB10, GB11 ,GB12, GB13, GB14}\}

$G_6=$ \{\textbf{G-R1, G-R2, G-R3, G-R4, G-R5, G-R6, G-R7}\}

$B_1=$ \{\textbf{BGR1, BGR2, BGR3}\} \hspace{0.3cm} $B_2=$ \{\textbf{BGL1}\}

$B_3=$ \{\textbf{BGD1}\} \hspace{2.1cm} $B_4=$ \{\textbf{BG1, BG2, BG3, BG4, BG5}\}

\begin{algorithm}[h]
\footnotesize
\caption{$\mathcal{MIS}$ Formation}\label{mis}
\KwData{Positions and colors of robots within 2 hop distance}
\KwResult{One color and one destination point}
\uIf{col($r$) is \texttt{green}}
{
    \uIf{view(r) $\in G_1$}{Move left}
    \uElseIf{view(r) $\in G_2$}{Move downward}
    \uElseIf{view(r) $\in G_3$}{Move right}
    \uElseIf{view(r) $\in G_4$}{Move upward}
    \uElseIf{view(r) $\in G_5$}{Change color to \texttt{blue}}
    \uElseIf{view(r) $\in G_6$}{Change color to \texttt{red}}
    \Else{Do nothing}
}
\uElseIf{col($r$) is \texttt{blue}}
{
    \uIf{view(r) $\in B_1$}
    {
        Change color to \texttt{green} and move right
    }
    \uElseIf{view(r) $\in B_2$}
    {
        Change color to \texttt{green} and move left
    }
    \uElseIf{view(r) $\in B_3$}
    {
        Change color to \texttt{green} and move downward
    }
    \uElseIf{view(r) $\in B_4$}{Change color to \texttt{green}}
    \Else{Do nothing}
}
\Else{Do nothing}
\end{algorithm}

\subsection{Correctness Proofs}
\begin{theorem}\label{col1}
There are no collisions of robots while executing the $\mathcal{MIS}$ Formation Algorithm.
\end{theorem}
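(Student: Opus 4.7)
The plan is to argue, by induction on rounds, that no collision ever occurs: since the initial configuration has all robots on distinct nodes, it suffices to show that in any single round of the semi-synchronous execution, the destinations chosen by the activated robots are pairwise distinct and, moreover, that no active robot's destination coincides with the current location of an inactive robot. The cleanest way to organize this is to enumerate the movement-producing branches of Algorithm~\ref{mis} and then, for every pair of such branches, rule out the possibility of two robots converging on the same node.

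First I would list the atomic movement actions the algorithm permits: a \texttt{green} robot can move left (view in $G_1$), down (view in $G_2$), right (view in $G_3$), or up (view in $G_4$); a \texttt{blue} robot can turn \texttt{green} and move right (view in $B_1$), left ($B_2$), or down ($B_3$). Colour-change-only actions ($G_5,G_6,B_4$) and the ``do nothing'' branches obviously cannot cause collisions and can be discarded. For each of the remaining actions I would record the precondition on the $\phi=2$-hop neighbourhood that enables it, emphasising the recurring ``2-hop separation'' invariant: the leftward and upward rules for \texttt{green} robots are activated only when $l1,l2$ (respectively $u1,u2$) are vacant, so in particular two left-moving robots cannot land on the same node, and symmetrically for up-moving robots. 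I would also isolate the key observation that $G_3$ and $G_2$ type views (the right/down moves) are triggered only after a \texttt{red} or \texttt{blue} obstruction has been detected within 2 hops, which sharply restricts the local configuration.

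Next, for each unordered pair of movement types I would exhibit the potential common target and check whether the local views required by both movers are jointly realisable under the assumption that the preceding round was collision-free.  Many pairs are ruled out immediately by the 2-hop rule (e.g.\ two leftward, two upward, or left+up onto a corner of an empty block), and several others are ruled out by colour mismatch (e.g.\ the $B_2,B_3$ actions only fire on the east boundary, so they cannot compete with a leftward-moving \texttt{green} interior robot for the same node). The more delicate cases are the perpendicular convergences: for instance a \texttt{green} robot executing a $G_3$ right-step into node $v$, while another \texttt{green} robot executes a $G_2$ down-step into the same $v$, or a \texttt{blue} robot performing a $B_1/B_3$ move while the adjacent \texttt{green} robot performs a $1$-hop shift. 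For each such pair I would inspect the defining view-sets ($G_3$ insists on a \texttt{red} neighbour above or to the left, $G_2$ insists on an east-boundary obstruction, etc.) and show that the two views cannot simultaneously hold around the same target node $v$ without forcing a \texttt{red}–\texttt{red} or \texttt{red}–\texttt{blue} adjacency that the previous rounds would never produce. A separate, slightly more bookkeeping-heavy check handles the \emph{1-hop shifting} of a blue sequence, where I would verify that when the adjacent \texttt{green} robot leaves, each predecessor \texttt{blue} robot's $B_1$ or $B_3$ view fires only for a target vacated in the same round by its immediate successor, so the entire cascade is destination-disjoint.

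The hard part, I expect, is precisely this last family of cases, because semi-synchrony allows arbitrary subsets of the blue sequence together with the adjacent \texttt{green} and nearby stand-alone \texttt{green} robots to move in the same round; I need to check that the views in $G_5,G_3,G_2,B_1,B_3$ are mutually exclusive on any two robots whose would-be destinations coincide, in spite of the limited 2-hop vision. I would organise this by fixing a candidate target node $v$ and, by inspecting which of the $13$ nodes visible from $v$ are forced to be \texttt{red}, \texttt{blue}, \texttt{green}, vacant, or nonexistent by each rule, derive a contradiction for every pair of rules that both claim $v$. Together with the straightforward cases this exhausts the branches of Algorithm~\ref{mis} and closes the inductive step.
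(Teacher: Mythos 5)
Your proposal follows essentially the same route as the paper: the paper also splits collisions into ``moving onto an occupied node'' versus ``two movers converging on a common vacant node,'' disposes of the former by noting the algorithm only ever targets nodes seen as vacant, and handles the latter by a case analysis over the unordered pairs of movement directions ($\mathcal{LR}$, $\mathcal{LU}$, $\mathcal{LD}$, $\mathcal{RU}$, $\mathcal{RD}$, $\mathcal{UD}$), reading off from the view-sets that the enabling preconditions of two such moves cannot both hold around a single target node. Your finer organisation by individual rules, the explicit induction on rounds, and the separate treatment of the 1-hop-shifting cascade are compatible refinements of the same pairwise-exclusion argument rather than a different method.
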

\begin{proof}
There can be two types of collisions.

Type-1: There is a robot present already in a node and another robot comes to that node.

Type-2: More than one robot, each from a different node comes to a particular vacant node.

According to the $\mathcal{MIS}$ Formation Algorithm no robot moves to a node that is already occupied. So there is no collision of Type-1.

According to the $\mathcal{MIS}$ Formation Algorithm, there are four types of movement of a robot i.e. left move($\mathcal{L}$), right move($\mathcal{R}$), upward move($\mathcal{U}$) and downward move($\mathcal{D}$). Considering all possible combinations there can be six different collisions i.e. ($\mathcal{LR}$), ($\mathcal{LU}$), ($\mathcal{LD}$), ($\mathcal{RU}$), ($\mathcal{RD}$), ($\mathcal{UD}$).

($\mathcal{LR}$): A robot will move left if its view belongs to $G_1$ or $B_2$. From Fig.~\ref{GL} and Fig.~\ref{BGL} it is clear that for ($\mathcal{L}$) movement of a robot \textit{r}, \textit{l1} neighbour node of \textit{r} will always remain vacant and \textit{l2} neighbour node of \textit{r} is vacant or occupied by a \texttt{red} robot or does not exist. There is no robot which will move to \textit{l1} neighbour node of \textit{r} by ($\mathcal{R}$) movement. So there is no ($\mathcal{LR}$) collision.

($\mathcal{LU}$):  A robot will move upward if its view belongs to $G_4$. From Fig.~\ref{GUB} and Fig.~\ref{GUR} it is clear that for ($\mathcal{U}$) movement of a robot \textit{r}, \textit{ne} neighbour node of \textit{r} is vacant or does not exist. There is no robot which will move to \textit{u1} neighbour node of \textit{r} by ($\mathcal{L}$) movement. So there is no ($\mathcal{LU}$) collision.

($\mathcal{LD}$):  A robot will move downward if its view belongs to $G_2$ or $B_3$. From Fig.~\ref{BGD}, Fig.~\ref{GDB}, Fig.~\ref{GDE} and Fig.~\ref{GDR} it is clear that for ($\mathcal{D}$) movement of a robot \textit{r}, \textit{se} neighbour node of \textit{r} does not exist. There is no robot which will move to \textit{d1} neighbour node of \textit{r} by ($\mathcal{L}$) movement. So there is no ($\mathcal{LD}$) collision.

($\mathcal{RU}$):  A robot will move upward if its view belongs to $G_4$. From Fig.~\ref{GUR} and Fig.~\ref{GUB} it is clear that for ($\mathcal{U}$) movement of a robot \textit{r}, \textit{nw} neighbour node of \textit{r} is vacant or occupied by a \texttt{red} robot or does not exist. There is no robot which will move to \textit{u1} neighbour node of \textit{r} by ($\mathcal{R}$) movement. So there is no ($\mathcal{RU}$) collision.

($\mathcal{RD}$):  A robot will move downward if its view belongs to $G_2$ or $B_3$. From Fig.~\ref{BGD}, Fig.~\ref{GDB}, Fig.~\ref{GDE} and Fig.~\ref{GDR} it is clear that ($\mathcal{D}$) movement of a robot is possible through east boundary only. A robot will move right if its view belongs to $G_3$ or $B_1$. From Fig.~\ref{GRB}, Fig.~\ref{GRR} and Fig.~\ref{BGR} it is clear that for ($\mathcal{R}$) movement of a robot \textit{r}, if \textit{r1} neighbour node of \textit{r} is on east boundary then \textit{ne} neighbour node of \textit{r} is vacant or occupied by a \texttt{red} robot else  \textit{r1} neighbour node of \textit{r} is not on east boundary. There is no robot which will move to \textit{r1} neighbour node of \textit{r} by ($\mathcal{D}$) movement. So there is no ($\mathcal{RD}$) collision.

($\mathcal{UD}$):  A robot will move upward if its view belongs to $G_4$. From Fig.~\ref{GUB} and Fig.~\ref{GUR} it is clear that for ($\mathcal{U}$) movement of a robot \textit{r}, \textit{u1} neighbour node of \textit{r} is vacant and \textit{u2} neighbour node of \textit{r} is vacant or does not exist. There is no robot which will move to \textit{u1} neighbour node of \textit{r} by ($\mathcal{D}$) movement. So there is no ($\mathcal{UD}$) collision.
\end{proof}
% \begin{lemma}\label{lemma2}
% A robot will not move after being a fixed robot.
% \end{lemma}
% \begin{proof}
% According to the $\mathcal{MIS}$ Formation Algorithm there is no movement or change of color for \texttt{red} robots. So a robot will not move after being a fixed robot.
% \end{proof}
% \begin{lemma}\label{lemma3}
% There will not be any disturbance in 1 hop shifting of a blue sequence.
% \end{lemma}
% \begin{proof}
% As there is no  collisions of robots while executing the $\mathcal{MIS}$ Formation Algorithm, there will not be any disturbance in 1 hop shifting of a \texttt{blue sequence}. That means 1 hop shifting will be done successfully.
% \end{proof}
\begin{lemma}\label{lemma3}
If all robots have turned its color to \texttt{red} then the set of robot occupied grid nodes forms an $\mathcal{MIS}$ of $\mathcal{G}$.
\end{lemma}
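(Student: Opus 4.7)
The plan is to combine a counting argument with an independence argument and conclude via Proposition~\ref{lemma0}. First I would observe that the initial deployment contains exactly $\lceil m\times n/2 \rceil = |S|$ robots, and by Theorem~\ref{col1} no two robots ever coincide; therefore in the terminal configuration the set $R$ of \texttt{red}-occupied nodes satisfies $|R|=|S|$. It then suffices to prove that $R$ is an independent set of $\mathcal{G}$, since Proposition~\ref{lemma0} immediately upgrades any independent set of this cardinality to an $\mathcal{MIS}$.

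To prove independence, I would maintain the stronger invariant that at every round of the execution, the set of nodes currently occupied by \texttt{red} robots is independent in $\mathcal{G}$. The base case is vacuous, and because \texttt{red} robots never move, this set can change only when a \texttt{green} robot $r$ transitions to \texttt{red}. By the $\mathcal{MIS}$ Formation Algorithm~\ref{mis}, this transition happens exactly when the view of $r$ lies in $G_6$. The core step is then a case analysis over the views G-R1 through G-R7 that constitute $G_6$: for each such view I would confirm that whenever the four immediate neighbors $\mathit{l1}, \mathit{u1}, \mathit{r1}, \mathit{d1}$ of $r$ exist in $\mathcal{G}$, none of them hosts a \texttt{red} robot. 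This would guarantee that $r$ is non-adjacent to every previously \texttt{red} robot, so the invariant survives the transition.

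The main obstacle will be precisely this case check: it is finite and mechanical, but it is where the careful design of $G_6$ pays off, and omitting it would leave a real gap. A secondary concern to address is that under the SSYNC scheduler several \texttt{green} robots may simultaneously adopt views in $G_6$ within the same round; I would accordingly verify from the structure of those views that no two such transitioning robots can sit at adjacent grid nodes just before the round, so the invariant is preserved even under batched updates. Once these local checks are in hand, the theorem follows at once by combining the invariant in the terminal state, the counting argument $|R|=|S|$, and Proposition~\ref{lemma0}.
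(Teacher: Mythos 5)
Your overall skeleton matches the paper's: establish that the \texttt{red}-occupied nodes form an independent set, then use the cardinality $\lceil mn/2\rceil$ together with collision-freeness (Theorem~\ref{col1}) to conclude maximality. Your handling of the SSYNC batching issue is also sound and corresponds to the paper's treatment of two robots turning \texttt{red} in the same round: since every view in $G_6$ requires the $l1$ and $u1$ nodes to be \emph{vacant} (not merely free of \texttt{red} robots), two robots with $G_6$ views in the same round cannot be adjacent.

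However, the core of your independence argument has a genuine gap. You claim that for each view in $G_6$ one can ``confirm that whenever the four immediate neighbors $l1, u1, r1, d1$ of $r$ exist, none of them hosts a \texttt{red} robot.'' This is only true for $l1$ and $u1$. The views \textbf{G-R1}--\textbf{G-R7} constrain the nodes $l1$, $l2$, $u1$, $u2$, $nw$ (vacant, or occupied by \texttt{red} robots, respectively), but leave $r1$ and $d1$ unconstrained --- they are ``black circle'' nodes that the robot ignores, and necessarily so, because a robot must be able to turn \texttt{red} while unsettled robots still sit to its right or below it. Hence the snapshot of the transitioning robot $r$ cannot by itself exclude a \texttt{red} robot at its $r1$ or $d1$ node, and your purely local invariant-preservation step fails in exactly that case. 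Closing this requires a non-local, temporal argument, which is the paper's Case~II: if a robot $r'$ at the $r1$ or $d1$ node of $r$ turned \texttt{red} at an earlier round, then at that round the $l2$, $u2$ and $nw$ neighbours of $r'$ (where they exist) were occupied by \texttt{red} robots; those nodes coincide with the $l1$ and $u1$ neighbours of $r$, and since \texttt{red} robots never move, $r$ could never subsequently see a view in $G_6$. Your proposal as written does not contain this idea, so the ``finite and mechanical'' case check you rely on would not go through for two of the four adjacency directions.
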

\begin{proof}
First we show that the set of robot occupied grid nodes is an independent set of $\mathcal{G}$. We show this by showing that no two \texttt{red} robots are adjacent. Opposite to our claim, let there be two adjacent \texttt{red} robots $r_1$ and $r_2$. If $r_1$ and $r_2$ are on the same column then let $r_2$ be the robot below $r_1$ and if $r_1$ and $r_2$ are on the same row then let $r_2$ be the robot right to $r_1$. Let $r_1$ and $r_2$ change its color to \texttt{red} in $k_1^{th}$ and $k_2^{th}$ round respectively. Now there can be two possibilities.

Case-I: ($k_1\le k_2$) Since \texttt{red} robots never move, so throughout $k_2^{th}$ round the $r_1$ robot is at the $u1$ or $l1$ neighbour node of $r_2$. According to our proposed algorithm, $r_2$ will change its color to \texttt{red} if it sees any view belongs to the set $G_6$. But no view in $G_6$ allows the $u1$ or $l_1$ neighbour node of $r_2$ to be occupied by a robot. So this leads to a contradiction.

Case-II: ($k_1 > k_2$) In this case $r_2$ becomes \texttt{red} and gets fixed before $r_1$. Hence the $l2$, $u2$ and $nw$ neighbours of $r_2$ must be occupied by \texttt{red} robots if these neighbour nodes exist and it sustains in $k_1$ round also (since \texttt{red} robots never move). If $r_1$ robot is at $u1$ (or, $l1$) neighbour node of $r_2$, then $u1$ (or, $l1$) neighbour node of $r_2$ exists. Hence view of $r_2$ at $k_2^{th}$ round must be one of \textbf{G-R2} (replace \textbf{G-R2} by \textbf{G-R4} for the case when $r_1$ is at $l1$ neighbour node of $r_2$), \textbf{G-R3}, \textbf{G-R5}, \textbf{G-R6} and \textbf{G-R7}. In all such views either $l1$ or $u1$ neighbour node of $r_1$ is occupied by a \texttt{red} robot. Hence $r_1$ would not change its color to \texttt{red} in $k_1^{th}$ round, which is a contradiction.

Hence if all robots turn \texttt{red} then the robot occupied nodes form an independent set. Now the number of robots is $\lceil \frac{mn}2\rceil$ which is the maximum possible size of an independent set of $\mathcal{G}$. Since Theorem~\ref{col1} gives that there is no collision of robots, so all the \texttt{red} robots must be at distinct grid nodes. So the number of robot occupied nodes after all robots turned \texttt{red} is also $\lceil \frac{mn}2\rceil$. Thus, the independent set formed by robot occupied grid nodes is an $\mathcal{MIS}$. 
\end{proof}

\begin{lemma}\label{lemma4}
If a row consists three types of robots i.e. red, blue and green then the \texttt{red} robots will be at left, \texttt{blue} robots will be at middle and \texttt{green} robots will be at right of the row.
\end{lemma}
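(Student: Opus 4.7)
The plan is to prove, by induction on the round number, the stronger invariant that in every row, read from west to east, all \texttt{red} robots appear first, then all \texttt{blue} robots, and finally all \texttt{green} robots. The lemma is then the special case when all three colours are simultaneously present. The base case is immediate since at round $0$ every robot is \texttt{green}, so the invariant is vacuous.

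For the inductive step, I would fix a row $R$ and enumerate the events during round $k+1$ that could conceivably break the invariant on $R$: (i) a \texttt{green} robot on $R$ changes colour to \texttt{red} (view in $G_6$) or to \texttt{blue} (view in $G_5$); (ii) a \texttt{blue} robot on $R$ changes colour to \texttt{green} (view in $B_4$, or performs a colour-changing move prescribed by $B_1$--$B_3$); (iii) a robot enters or leaves $R$ by an upward or downward move; (iv) a robot moves horizontally inside $R$. For each event I would consult the corresponding view family and argue locally. For instance, every view in $G_6$ forces the $l1$ (and $l2$, when it exists) neighbour to host a \texttt{red} robot or be off the grid, so the newly \texttt{red} robot has only \texttt{red}'s to its west in $R$; every view in $G_5$ forces a \texttt{red} neighbour in $l1$ (or a combination of a \texttt{red} in $l1$ with a blocker in $u1$/$u2$), which, together with the induction hypothesis, wedges the new \texttt{blue} robot immediately east of the \texttt{red} block and west of any \texttt{green} remaining in $R$.

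The most delicate cases concern the \emph{one-hop shifting} of a \texttt{blue sequence}: when the \texttt{adjacent green robot} or a leading \texttt{blue} robot leaves its cell, every predecessor \texttt{blue} in the sequence shifts by one step in the same direction. Because the shift is rigid, the west-to-east ordering of \texttt{red}/\texttt{blue}/\texttt{green} in $R$ is preserved. A leftward move by a \texttt{green} or a \texttt{blue} is permitted only when the two cells to the left are vacant or \texttt{red}, so such a move never hops over a \texttt{blue} or a \texttt{red} of $R$; a rightward move of a \texttt{green} requires $r1$ to be vacant, hence the robot lands inside what was already the \texttt{green} tail of $R$. An up/down move that drops a robot into $R$ reintroduces it as \texttt{green}, but only into cells east of every \texttt{red}/\texttt{blue} obstruction present in the snapshot it acted on.

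The main obstacle is the exhaustive bookkeeping over the many view labels (\textbf{G-R1}--\textbf{G-R7}, \textbf{GB1}--\textbf{GB14}, \textbf{BG1}--\textbf{BG5}, together with the movement families $G_1$--$G_4$ and $B_1$--$B_3$) and the boundary-edge variants (north, west, and the east-boundary substitution of $d1$ for $r1$); one must verify for each witness configuration that the claimed colours at $l1$, $l2$, $u1$, $u2$, $nw$ really do appear. Absence of positional overlaps during the round is handled by Theorem~\ref{col1}, and the simultaneity permitted by the SSYNC scheduler is harmless because every robot's decision in round $k+1$ is determined by a two-hop snapshot taken before any move of that round, which already satisfies the inductive invariant.
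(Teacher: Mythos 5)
Your overall framework (induction on the round number with the per-row invariant ``\texttt{red}, then \texttt{blue}, then \texttt{green}, west to east'') is structurally different from the paper's proof, which is a short, non-inductive local argument: a robot turns \texttt{red} only on a view in $G_6$, all of which require $l1$ and $u1$ to be vacant and $l2$, $nw$, $u2$ to be \texttt{red}, so no non-\texttt{red} robot sits west of a \texttt{red} robot in its row; and the first robot of a \texttt{blue sequence} has a \texttt{red} robot at $l1$ while the sequence occupies consecutive nodes, so no \texttt{green} robot lies among the \texttt{blue} ones. The paper never inducts over rounds or enumerates movement events; your scheme is more systematic, but it also commits you to verifying a stronger statement.

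That stronger statement is where the plan breaks. It is not true that ``every view in $G_5$ forces a \texttt{red} neighbour in $l1$.'' The views that extend a \texttt{blue sequence} along a row (Fig.~\ref{GBP}) have a \texttt{blue}, not a \texttt{red}, robot at $l1$; more seriously, the east-boundary view of Fig.~\ref{GBE} lets a \texttt{green} robot $r_d$ at $(i,n)$ turn \texttt{blue} precisely when its $u1$ neighbour is \texttt{blue}, its $d1$ neighbour is \texttt{green}, and $l1$ or $l2$ is merely \emph{occupied} --- possibly by a \texttt{green} robot. Immediately after that transition, row $i$ contains a \texttt{green} robot strictly west of a \texttt{blue} robot, so the invariant you intend to carry through the induction is not preserved by this event; rows crossed by the vertical tail of a \texttt{blue sequence} are genuine counterexamples to it. To salvage the argument you would have to weaken the invariant --- for instance, assert the red--blue--green ordering only for the row along which the \texttt{blue sequence} runs horizontally, and treat an east-boundary \texttt{blue} robot belonging to a tail as a separate case --- which is in effect what the paper's (admittedly terse) proof does by reasoning only about the row in which the sequence starts.
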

\begin{proof}
A \texttt{green} robot becomes red, when it sees its \textit{l2}, \textit{nw}, \textit{u2} neighbour nodes (if exist) are occupied by \texttt{red} robots and \textit{l1}, \textit{u1} neighbour nodes (if exist) are vacant. So there cannot be any green or \texttt{blue} robot at left of a \texttt{red} robot. Thus \texttt{red} robots are at left of a row.

When a \texttt{blue sequence} starts then the \textit{l1} neighbour node (if exists) of the \texttt{blue} robot which became blue first, is occupied by a \texttt{red} robot. A \texttt{blue sequence} is a sequence of \texttt{blue} robots which are at consecutive nodes. So there is no \texttt{green} robot at the middle of a \texttt{blue sequence}.

Thus the \texttt{red} robots will be at left, \texttt{blue} robots will be at middle and \texttt{green} robots will be at right of the row.
\end{proof}

\begin{lemma}\label{lemma5}
If there are $\lceil \frac{n}{2} \rceil-2$ robots present in a row consists of $(n-1)$ nodes, then after finite round $(n-1)^{th}$ and $(n-2)^{th}$ node will be vacant.
\end{lemma}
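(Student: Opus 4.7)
The plan is to show that the leftward-motion rules of the $\mathcal{MIS}$ Formation Algorithm eventually drive all robots in the row into a leftmost-packed configuration respecting the $2$-hop separation, and then to count: with at most $k = \lceil n/2 \rceil - 2$ robots packed as tightly as possible from the left, the rightmost occupied column is at most $2k-1 \le n-4$, leaving columns $n-1$ and $n-2$ vacant.

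I would carry this out in three steps. First, I would observe that any green robot in the row with both its $l1$ and $l2$ neighbours vacant (or with $l2$ occupied by a red robot or coinciding with the west boundary) matches some view in $G_1$ and hence moves one step to the left upon activation; by the fairness of the semi-synchronous scheduler, such a robot moves in finite time. Second, I would argue that the leftward cascade terminates in finitely many rounds, since each leftward move strictly decreases a robot's column coordinate, which is bounded below by $1$, and by Theorem~\ref{col1} no two robots collide. Third, I would inspect the terminal configuration and use Lemma~\ref{lemma4} on the ordering of colored robots within a row to conclude that the surviving robots of the row all lie within the leftmost $2k-1$ columns, which a short parity case split (even $n$ gives $2k-1 = n-5$ and odd $n$ gives $2k-1 = n-4$) then translates into the claimed vacancies at columns $n-1$ and $n-2$.

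The main obstacle will be handling the blue robots cleanly. A blue robot in the row might re-enter the green state and move right under a view in $B_1$ or $B_4$, which could at first glance threaten the leftmost-packing invariant. To address this, I would appeal to the structure of a blue sequence: by Lemma~\ref{lemma4} any blue robots in the row lie to the left of all green robots, so a rightward move of a blue-turned-green robot only fills a slot vacated by another robot shifting still further left (a $1$-hop shifting). Since no robot can enter the row from outside, the row being drained only by the upward moves associated with views in $G_4$, the bound on the rightmost occupied column is preserved throughout the cascade and continues to hold in the terminal state, giving the desired conclusion.
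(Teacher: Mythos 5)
Your proposal follows essentially the same route as the paper's proof: show that the leftward-motion rules eventually pack the row from the left with consecutive robots at most two hops apart (invoking Lemma~\ref{lemma4} for the red--blue--green ordering), and then count positions to conclude the last robot cannot reach column $n-2$. The only discrepancy is your worst-case bound $2k-1$ on the rightmost occupied column, which tacitly assumes the leftmost robot reaches column $1$; the paper instead places the $i$-th robot at column $2i$ (allowing the leftmost red robot to sit at column $2$, as happens on even-indexed rows), giving $2k\le n-3$ --- which still leaves columns $n-1$ and $n-2$ vacant, so your conclusion is unaffected.
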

\begin{proof}
In a row the distance of a \texttt{red} robot from its immediate left or immediate right \texttt{red} robot is exactly 2 hop. In a row distance of a \texttt{red} robot from its immediate right \texttt{blue} robot is exactly 1 hop. In a row after finite round the distance of a \texttt{green} robot from its immediate left or immediate right \texttt{green} robot will be at most 2 hop since all \texttt{green} robots move left by keeping 2 hop distance. In a row distance of a \texttt{blue} robot which became blue last, from its immediate right \texttt{green} robot is exactly 1 hop. In a row distance of a \texttt{blue} robot from its immediate left or immediate right \texttt{blue} robot is exactly 1 hop.
Thus by Lemma~\ref{lemma4} in a row distance of a row from its immediate left or immediate right robot is at most 2 hop. Maximum possible number of robots is $\frac{n}{2}-2$ (if $n$ is even) or $\frac{n+1}{2}-2$ (if $n$ is odd). If possible we try to put the robots in such a way so that $(n-1)^{th}$ and $(n-2)^{th}$ node does not remain empty. If we put robots on even positioned nodes then $i^{th}$ robot will be at $2i^{th}$ node. $(\frac{n}{2}-2)^{th}$ robot will be at $(n-4)^{th}$ node. $(\frac{n+1}{2}-2)^{th}$ robot will be at $(n-3)^{th}$ node. Thus in both cases $(n-1)^{th}$ and $(n-2)^{th}$ node will be vacant.
\end{proof}
\begin{lemma}\label{lemma6}
% Consider the uppermost row which contains at least one \texttt{green} or \texttt{blue} robot. Let $r_1$ be the left most non \texttt{red} robot on that row.
Let $r_1$ be the left most non \texttt{red} robot in the topmost non \texttt{red} robot occupied row. Let $r1$ be a part of a \texttt{blue sequence} and $r_1$ be the first robot which turned \texttt{blue} for any one view from the Fig.~\ref{GBR}. If the \texttt{blue sequence} ends at $(m-1, n)$node then 1 hop shifting will be done.
\end{lemma}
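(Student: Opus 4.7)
The plan is to show that the \texttt{adjacent green robot} $r^*$ at the southeast corner $(m,n)$ eventually executes a one-hop left move, which by the definition of 1 hop shifting triggers the shift of the blue sequence.

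First I would fix the geometry. Because $r_1$ is the leftmost non-\texttt{red} robot in the topmost non-\texttt{red} row (call it row $k$) and turned \texttt{blue} via a view in Fig.~\ref{GBR}, the sequence is L-shaped (or a pure horizontal run when $k=m-1$): it extends rightward from $r_1$ along row $k$ to $(k,n)$ and, when $k<m-1$, descends column $n$ to $(m-1,n)$. In either case $(m-1,n)$ lies on the east boundary, so by the definition of the adjacent green robot, $r^*$ occupies $(m,n)$ with $u1(r^*)=(m-1,n)$ \texttt{blue}. The corner cell $(m,n)$ cannot be \texttt{blue} (else the sequence would extend into it) and, per the adjacent-green-robot convention, is \texttt{green}.

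Next I would enumerate $r^*$'s legal actions. Since $(m,n)$ is the southeast corner, both $d1$ and $r1$ neighbours are absent, so among the rules for a \texttt{green} robot on the east boundary whose $u1$ is \texttt{blue} only Case-1 can fire: $r^*$ moves one hop left provided $l1=(m,n-1)$ and $l2=(m,n-2)$ are both vacant. It cannot turn \texttt{red} (no $G_6$ view admits a \texttt{blue} $u1$) and cannot turn \texttt{blue} (every $G_5$ view for east-boundary robots requires a $d1$ neighbour). Hence $r^*$ is blocked exactly until $(m,n-1)$ and $(m,n-2)$ are vacant.

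The crux is then to show that these two cells become vacant within finitely many rounds. I would argue about row $m$: by Lemma~\ref{lemma4} any \texttt{red} robots there form a leftmost prefix, while every non-\texttt{red} robot in row $m$ is \texttt{green}, since a distinct blue sequence inside row $m$ would demand its own adjacent green robot located in row $m$, conflicting with $r^*$'s role at the corner. The \texttt{green} robots in row $m$ are governed by the rules in $G_1$ and $G_4$: they migrate left while maintaining 2-hop spacing and then migrate upward at the west boundary. Invoking Lemma~\ref{lemma5} on the $n-1$ cells of row $m$ to the left of $r^*$, with the number of non-$r^*$ robots in row $m$ bounded by $\lceil n/2\rceil-2$ (obtained from the total population $\lceil mn/2\rceil$ together with the robots already fixed in or above row $k$), we conclude that $(m,n-1)$ and $(m,n-2)$ are both vacant after finitely many rounds. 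At the next activation $r^*$ performs its left move, and by definition the predecessor blue robots each shift one hop to fill the vacated starting cell, completing the 1 hop shift.

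The main obstacle is the last counting step: one must rigorously bound the number of robots in row $m$ from the algorithm's progress so far, and one must rule out the pathological scenario in which a green robot in row $m$ is permanently blocked on its upward migration by a non-\texttt{red} robot directly above. Both points hinge on the implicit invariant—built up by the earlier lemmas—that the algorithm fixes robots monotonically from the northwest toward the southeast, so by the time a blue sequence reaches $(m-1,n)$, row $m$ has already had the opportunity to drain its surplus robots and $(m,n-1), (m,n-2)$ cannot stay occupied forever.
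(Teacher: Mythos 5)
Your overall strategy---find a row whose east-boundary member of the blue sequence can escape leftward because the two cells to its left drain out, and note that any such departure triggers the 1 hop shift---matches the paper's. The difference, and the genuine gap, is that you localize everything to row $m$ and the adjacent green robot $r^*$ at $(m,n)$, whereas the paper only establishes a deficit for \emph{some} row among $k+1,\dots,m$.

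Concretely: the global count gives that rows $1,\dots,k-1$ hold exactly $u_1,\dots,u_{k-1}$ robots and row $k$ holds more than $u_k$, hence rows $k+1,\dots,m$ \emph{together} hold fewer than $u_{k+1}+\dots+u_m$ robots. This does not bound row $m$ by $\lceil n/2\rceil-2$: the deficit can sit entirely in intermediate rows while row $m$ carries $u_m$ or more robots, in which case your appeal to Lemma~\ref{lemma5} for row $m$ is unavailable and $(m,n-1)$, $(m,n-2)$ need not become vacant by your argument. You flag this yourself as ``the main obstacle,'' but it is exactly the step the lemma needs. The paper's proof instead takes the topmost row with fewer than its quota of robots, moves the choice downward whenever upward migration from below refills it (terminating because the number of rows is finite), and lands on a row $l$ with fewer than $u_l$ robots receiving no further inflow; Lemma~\ref{lemma5} applied to that row's first $n-1$ cells vacates $(l,n-1)$ and $(l,n-2)$, after which the east-boundary robot at $(l,n)$---a blue robot of the tail (via view \textbf{BGL1}) or the adjacent green robot---moves left, and by the definition of 1 hop shifting any such move suffices. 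So the robot that breaks the sequence need not be $r^*$ at the corner; replacing your row-$m$ count with this existence-of-a-stable-deficient-row argument is what closes the gap.
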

\begin{proof}
If the \texttt{blue sequence} starts at $k^{th}$ row, continues through east boundary and ends at $(m-1,n)$ node then $1^{st}$,$2^{nd}$, \ldots, $i^{th}$, \ldots, $(k-1)^{th}$row each contains $u_i$ robots. $k^{th}$row contains more than $u_k$ robots. $(k+1)^{th}$,$(k+2)^{th}$,\ldots,$m^{th}$row together will contain less than $u_{k+1}+u_{k+2}+\ldots+u_m$ robots. There will be atleast one row (say $p^{th}$row) which will contain less than $u_p$ robots. If more than one such row exists then consider the top most row (say $l^{th}$ row) which contains less than $u_l$ robots. If any robot comes from below row and makes $u_l$ number robot in $l^{th}$ row , then we shall consider below $q^{th}$ row which contains less than $u_q$ robot. If this continues since the number of rows is constant we must get such a row (say $r^{th}$ row) where number of robots will be less than $u_r$ and no robots will enter from below. Without loss of generality, we consider such row as $l^{th}$ row. If $n$ is odd then $u_l$ is $\lceil \frac{n}{2} \rceil$ when $l$ is odd and $\lceil \frac{n}{2} \rceil-1$ when $l$ is even . If $n$ is even then $u_l$ is $\lceil \frac{n}{2} \rceil$. If we consider $l^{th}$ row except the east boundary node which is occupied by a \texttt{blue} robot or \texttt{green} robot then there are $(n-1)$ nodes and atmost  $\lceil \frac{n}{2} \rceil-2$ robots. After finite round when all the robots of $l^{th}$row except the right most \texttt{blue} or \texttt{green} robot will be at atmost 2 hop distance from each other, $(l,n-2)$ and $(l,n-1)$ node will be vacant by Lemma~\ref{lemma5}. Then the right most  robot of $l^{th}$row will move from $(l,n)$ node to $(l,n-1)$ node and 1 hop shifting will be done automatically.
\end{proof}

\begin{theorem}
$\mathcal{MIS}$ Formation Algorithm forms maximum independent set after finite rounds without any collisions.
\end{theorem}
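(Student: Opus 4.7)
The plan is to combine the correctness results already established with a row-by-row induction that yields termination. Theorem~\ref{col1} rules out collisions in every execution of the algorithm, and Lemma~\ref{lemma3} reduces the task of verifying $\mathcal{MIS}$ formation to showing that in finitely many rounds every robot turns \texttt{red}. What remains is therefore a pure progress/termination argument.

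I would proceed by induction on the target MIS positions $S$, ordered lexicographically by $(\text{row},\text{column})$. The base case is that position $(1,1)$ receives a \texttt{red} robot in finitely many rounds: by the $G_1$, $G_3$, and $G_4$ transitions every \texttt{green} robot drifts leftward and then upward while respecting the 2-hop clearance, and using fairness of the semi-synchronous scheduler together with Lemma~\ref{lemma4} the robot that eventually reaches $(1,1)$ observes a view in $G_6$ and becomes \texttt{red}. For the inductive step, assume the first $k$ positions of $S$ carry \texttt{red} robots and aim to show that $p_{k+1}$ is filled after finitely many additional rounds. The intuition is that once some prefix of rows is fixed, the \texttt{red} robots above act as an immovable shelf that funnels every remaining \texttt{green} robot leftward and upward toward the next vacant MIS slot. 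A \texttt{green} robot that cannot advance switches to \texttt{blue} via the $G_5$ rule and starts a blue sequence; I would then split on how that sequence terminates. A sequence ending strictly before the east boundary is discharged by the $B_1$ rule applied to its adjacent green robot, followed by cascaded 1-hop shifts; a sequence that bends down the east boundary and ends above row $m-1$ is discharged by the $B_2$ or $B_3$ rules together with the $B_4$ transition at the bend; a sequence that ends at $(m-1,n)$ is exactly the situation handled by Lemma~\ref{lemma6}, whose hypothesis is supplied by Lemma~\ref{lemma5}.

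The principal obstacle is the blue sequence analysis. One must verify that no blue sequence grows unboundedly, that the cascade of 1-hop shifts eventually delivers a green robot to the highest vacant MIS slot, and that no deadlock arises between a horizontal blue run and its vertical continuation down the east boundary. This requires tracking which robots can newly turn \texttt{blue} between two successive activations of the adjacent green robot, ruling out cyclic oscillations \texttt{blue}$\leftrightarrow$\texttt{green} (the $B_4$ rule combined with $G_5$ permits a colour change only when the local obstruction has genuinely moved, so no orbit is possible), and invoking the fair adversary so that every robot in the sequence is activated infinitely often. Once this progress claim is established the induction closes: all $\lceil mn/2\rceil$ robots eventually become \texttt{red}, Lemma~\ref{lemma3} identifies the resulting configuration as an $\mathcal{MIS}$ of $\mathcal{G}$, and Theorem~\ref{col1} certifies that the entire execution is collision-free.
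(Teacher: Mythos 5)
Your overall skeleton is the same as the paper's: invoke Theorem~\ref{col1} for collision-freedom, reduce via Lemma~\ref{lemma3} to showing that every robot eventually turns \texttt{red}, and handle the stuck configurations through the blue-sequence case split with Lemmas~\ref{lemma5} and~\ref{lemma6}. The difference is that the paper actually carries out the progress argument, while you only state it as an obligation. Your second and third paragraphs repeatedly say what ``must be verified'' --- that no blue sequence grows unboundedly, that the 1-hop shifts deliver a robot to the next slot, that no \texttt{blue}$\leftrightarrow$\texttt{green} oscillation occurs --- and then close the induction ``once this progress claim is established.'' That progress claim \emph{is} the theorem; labelling it the principal obstacle and asserting it in a parenthetical does not discharge it. The paper's device for discharging it is to fix $r_1$ as the \emph{leftmost non-\texttt{red} robot in the topmost row containing a non-\texttt{red} robot}, observe that every robot in the north-west quadrant of $r_1$ is \texttt{red} and hence immovable, and then run an exhaustive case analysis on the colour and view of $r_1$ (including the re-selection of $r_1$ when a robot arrives from below, and the tail versus blue-sequence dichotomy) to conclude that $r_1$ turns \texttt{red} after finitely many rounds and never reverts. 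Since each such phase strictly decreases the number of non-\texttt{red} robots, termination follows. Your write-up contains no analogue of this decreasing measure with an actual argument attached to it.

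A secondary structural issue: you induct on the positions of the specific set $S$ in lexicographic order, which presupposes that the algorithm fills \emph{exactly} the nodes of $S$ and in that order. Lemma~\ref{lemma3} gives you neither fact --- it only says that if all robots are \texttt{red}, the occupied set is \emph{some} maximum independent set (independence plus cardinality $\lceil mn/2\rceil$). The grid has maximum independent sets other than $S$, so to make your induction sound you would have to prove the additional invariant that the \texttt{red} robots always occupy a lexicographic prefix of $S$; the paper avoids this burden entirely by inducting on robots rather than on target positions. (Also, as a small slip, $G_3$ is the move-right rule, so it does not contribute to the ``drift leftward'' you invoke in the base case.)
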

\begin{proof}
Consider the uppermost row which contains at least one \texttt{green} or \texttt{blue} robot. If there is no such row then every robot present in the grid is \texttt{red}. Therefore by Lemma~\ref{lemma3} the proof is done.

Let there exists a row (say $k^{th}$ row) which contains at least one \texttt{green} or \texttt{blue} robot. Let $r_1$ be the left most non \texttt{red} robot on that row. $r_1$ can be \texttt{green} or \texttt{blue}. Note that all the robots present in the north-west quadrant of $r_1$ are red and they are fixed.

Case-1: $r_1$ is \texttt{green}.

$r_1$ continues moving left as long as it sees any views from \{\textbf{GL1, GL2, GL3, GL4}\} (Fig.~\ref{GL}). While $r_1$ is progressing left through the row if any \texttt{green} robot from below row move upwards and comes to the left of $r_1$ then we will consider the new robot as $r_1$. If $r_1$ does not see any view from \{\textbf{GL1, GL2, GL3, GL4}\} (Fig.~\ref{GL}) then it must see any view from \{\textbf{GB1, GB2, GB3, GB4, GB5, GD1, GD2, GR1, GR2, GR3, GR4, GR5, GU1, GU2, GU3, GU4, GU5, GU6, GU7, GU8, GU9, GU10, GU11, GU12, G-R1, G-R2, G-R3, G-R4, G-R5, G-R6, G-R7}\} (Fig.~\ref{GBR}, Fig.~\ref{GDR}, Fig.~\ref{GRR}, Fig.~\ref{GUR}, Fig.~\ref{G-R}).
If $r_1$ sees any one view from \{\textbf{GB1, GB2, GB3, GB4, GB5}\} (Fig.~\ref{GBR}) then it turns blue and goes to Case-2.
If $r_1$ sees any one view from \{\textbf{GD1, GD2}\} (Fig.~\ref{GDR}) then it will go to its \textit{d1} neighbour node. Now we may get a new $r_1$ since there may exists some non \texttt{red} robot at the left in the current row. Now $r_1$ will not move to its \textit{d1} neighbour node and will remain $r_1$ since it will not get any view from \{\textbf{GD1, GD2}\} (Fig.~\ref{GDR}).
If $r_1$ sees any one view from \{\textbf{GR1, GR2, GR3, GR4, GR5}\} (Fig.~\ref{GRR}) then it will go  to its \textit{r1} neighbour node. Now it will not see any view from \{\textbf{GR1, GR2, GR3, GR4, GR5}\} (Fig.~\ref{GRR}) and \{\textbf{GD1, GD2}\} (Fig.~\ref{GDR}).
If $r_1$ sees any one view from \{\textbf{G-R1, G-R2, G-R3, G-R4, G-R5, G-R6, G-R7}\} (Fig.~\ref{G-R}) then it turns red. Else $r_1$ will see any one view from \{\textbf{GU1, GU2, GU3, GU4, GU5, GU6, GU7, GU8, GU9, GU10, GU11, GU12}\} (Fig.~\ref{GUR}) and continues moving upward until it sees any one view from \{\textbf{G-R1, G-R2, G-R3, G-R4, G-R5, G-R6, G-R7}\} (Fig.~\ref{G-R}). Finally $r_1$ will see any one view from  \{\textbf{G-R1, G-R2, G-R3, G-R4, G-R5, G-R6, G-R7}\} (Fig.~\ref{G-R}) and will turn red.

Case-2: $r_1$ is \texttt{blue}.

A \texttt{blue} robot became blue as a part of a \texttt{blue sequence}. Now it is either a part of a \texttt{blue sequence} or a part of a tail. As $r_1$ is the left most non \texttt{red} robot in the topmost non \texttt{red} robot occupied row, there can be two cases.

    Case-2.1: If $r_1$ is a part of a tail then $r_1$ will be at east boundary and the topmost \texttt{blue} robot of the tail. If we consider \textit{l1} and \textit{u1} neighbour nodes of $r_1$ then there can be four type of figures .
    
    In these three types i.e. \{\textbf{BG1, BG3, BG5}\} (Fig.~\ref{BG}), atleast one among \textit{l1} and \textit{u1} neighbour nodes of $r_1$ is not occupied by a \texttt{red} robot and $r_1$ will turn into \texttt{green} . The robot $r_1$ goes to Case-1 and this $r_1$ will never become \texttt{blue} as it was the upmost robot of a tail and all the robots which are at north-west quadrant of $r_1$ are \texttt{red} and atleast one among \textit{l1} and \textit{u1} neighbour nodes of $r_1$ is not occupied by a \texttt{red} robot.
    
    If both \textit{l1} and \textit{u1} neighbour nodes of $r_1$ are occupied by \texttt{red} robots then $r_1$  goes to Case 2.2 (similar to \{\textbf{GB5}\} (Fig.~\ref{GBR}).

    Case-2.2: If $r_1$ is a part of a \texttt{blue sequence} then $r_1$ is the first robot which turned \texttt{blue} for any one view from the Fig.~\ref{GBR}. The \texttt{blue sequence} can continue along the row and east boundary.
    
    Case-2.2.1: If the \texttt{blue sequence} ends before $(m-1,n)$ node 1 hop shifting will be done after the adjacent green robot moves from its node. 1 hop shifting would be done before it if any \texttt{blue} robot of the \texttt{blue sequence} from the east boundary moves left.
    
    Case-2.2.2: If the \texttt{blue sequence} ends at $(m-1,n)$ node 1 hop shifting will be done by Lemma~\ref{lemma6}.

% If the \texttt{\texttt{blue sequence}} ends before $(n-1)^{th}$column then the \texttt{adjacent green robot} will move upwards or right. 

% If the \texttt{\texttt{blue sequence}} ends at $(n-1)^{th}$column then the \texttt{adjacent green robot} will move upwards or downwards.

% If the \texttt{\texttt{blue sequence}} continues through east boundary and ends before $(m-1,n)$ node then the \texttt{adjacent green robot} will move left or downwards.

% Thus if the \texttt{blue sequence} does not reach $(m-1,n)$ node then  1 hop shifting will be done automatically.

Now $r_1$ will turn into \texttt{green} and goes to case-1. This $r_1$ will never become \texttt{blue} as its \textit{l1} or \textit{u1} neighbour node is vacant and all the robots which are at north-west quadrant of $r_1$ are \texttt{red}. 

Selecting a non \texttt{red} robot we are making a non \texttt{red} robot into a \texttt{red} robot. Since the total number of robots is finite, after finite round all the robots will be \texttt{red}. Therefore by Lemma~\ref{lemma3} the proof follows.
\end{proof}

% \begin{figure} 
%     \centering
%     \begin{minipage}{.5\textwidth}
%     \centering
%     \includegraphics[height=3cm,width=8cm]{diagrams/GL1 .eps}  
%     \caption{All the views where the robot with color \texttt{green} moves left.}
%     \label{}
%     \end{minipage}
% \end{figure}

% \begin{figure} 
%     \centering
%     \begin{minipage}{.5\textwidth}
%     \centering
    
%     \end{minipage}
%     \begin{minipage}{.5\textwidth}
%     \centering

%     \end{minipage}
% \end{figure}

\section{conclusion}
\label{sec4}
This work presents an algorithm that forms a Maximum Independent Set ($\mathcal{MIS}$) on a finite rectangular grid $\mathcal{G}$ by myopic robots. If the size of a maximum independent set of $\mathcal{G}$ is $k$ then initially $k$ robots are placed arbitrarily on distinct nodes of $\mathcal{G}$. The robots are considered to be luminous and have a light that can take three distinct colors. We assume the robots agree on the global notion of north, south, east, and west direction. The robots have two hop visibility. The robots are controlled under an adversarial semi-synchronous scheduler. In construct to previous $\mathcal{MIS}$ formation algorithms the algorithm proposed in this work does not use the door concept. It allows the robots to form $\mathcal{MIS}$ from any arbitrary starting configuration. This generalizes the initial condition of the previous works for rectangular grid topology.     

In this work, we assumed two visibility of robots, so as a future direction one can try proposing an $\mathcal{MIS}$ formation algorithm which only uses one hop visibility of robots. Further, it will be interesting to provide an algorithm for the same problem under an asynchronous scheduler.

\bibliographystyle{ACM-Reference-Format}
\bibliography{mis}

% %%
% %% If your work has an appendix, this is the place to put it.
% \appendix

% \section{Research Methods}

% \subsection{Part One}

% Lorem ipsum dolor sit amet, consectetur adipiscing elit. Morbi
% malesuada, quam in pulvinar varius, metus nunc fermentum urna, id
% sollicitudin purus odio sit amet enim. Aliquam ullamcorper eu ipsum
% vel mollis. Curabitur quis dictum nisl. Phasellus vel semper risus, et
% lacinia dolor. Integer ultricies commodo sem nec semper.

% \subsection{Part Two}

% Etiam commodo feugiat nisl pulvinar pellentesque. Etiam auctor sodales
% ligula, non varius nibh pulvinar semper. Suspendisse nec lectus non
% ipsum convallis congue hendrerit vitae sapien. Donec at laoreet
% eros. Vivamus non purus placerat, scelerisque diam eu, cursus
% ante. Etiam aliquam tortor auctor efficitur mattis.

% \section{Online Resources}

% Nam id fermentum dui. Suspendisse sagittis tortor a nulla mollis, in
% pulvinar ex pretium. Sed interdum orci quis metus euismod, et sagittis
% enim maximus. Vestibulum gravida massa ut felis suscipit
% congue. Quisque mattis elit a risus ultrices commodo venenatis eget
% dui. Etiam sagittis eleifend elementum.

% Nam interdum magna at lectus dignissim, ac dignissim lorem
% rhoncus. Maecenas eu arcu ac neque placerat aliquam. Nunc pulvinar
% massa et mattis lacinia.

\end{document}